\documentclass[journal]{IEEEtran}
\usepackage[final]{graphicx}
\usepackage{subfigure}
\usepackage{float}
\usepackage{amsmath}
\usepackage{cases}
\usepackage{color}
\usepackage{colortbl}
\usepackage{algorithm}
\usepackage{algpseudocode}
\usepackage{amsmath}
\usepackage{amssymb}
\usepackage{booktabs}
\usepackage{setspace}
\usepackage{array}
\usepackage{rotating}
\setcounter{secnumdepth}{4}
\usepackage[normalem]{ulem}
\usepackage{bbding}

\renewcommand{\normalsize}{\fontsize{10}{12}\selectfont}

\newtheorem{lemma}{\noindent \bf Lemma}
\newtheorem{theorem}{ \bf Theorem}
\newenvironment{proof}{{ \noindent \it Proof.}}{\hfill $\blacksquare$}

\makeatletter
\renewcommand{\maketag@@@}[1]{\hbox{\m@th\normalsize\normalfont#1}}%
\makeatother
\usepackage{stfloats}
\usepackage{cite}
\usepackage{makecell}
\usepackage{multirow}
\usepackage{hyperref}


\ifCLASSINFOpdf
\else
\fi

\hyphenation{op-tical net-works semi-conduc-tor}
\usepackage{caption}
\usepackage{mathtools}
\UseRawInputEncoding

\begin{document}
\title{Cooperative Sensing in Cell-free Massive MIMO ISAC Systems: Performance Optimization and Signal Processing}
\author{Haotian Liu,~\IEEEmembership{Student Member,~IEEE,}
Zhiqing Wei,~\IEEEmembership{Member,~IEEE,}
Luyang Sun,~\IEEEmembership{Student Member,~IEEE,}\\
Ruizhong Xu,~\IEEEmembership{Student Member,~IEEE,}
Yixin Zhang,~\IEEEmembership{Member,~IEEE,}
Zhiyong Feng,~\IEEEmembership{Senior Member,~IEEE}

\thanks{Haotian Liu, Zhiqing Wei, Luyang Sun, Ruizhong Xu, Yixin Zhang, and Zhiyong Feng 
are with the Beijing University of Posts and Telecommunications, 
Beijing 100876, China (e-mails: \{haotian\_liu; weizhiqing; sly1105; xuruizhong; yixin.zhang; fengzy\}@bupt.edu.cn). 
\textit{Corresponding authors: Zhiqing Wei}
}
}

\maketitle

\begin{abstract}
Emerging applications such as low-altitude economy and intelligent transportation
hold the promise of significant economic and social benefits,
requiring the support of technologies that integrate robust communication with precise sensing. 
Integrated sensing and communication (ISAC),
as a technology enabled seamless connection between communication and sensing,
is regarded a core enabling technology for these applications.
However, the accuracy of single-node sensing in ISAC system is limited,
prompting the emergence of multi-node cooperative sensing.
In multi-node cooperative sensing,
the synchronization error limits the sensing accuracy,
which can be mitigated by the architecture of cell-free massive multi-input multi-output (CF-mMIMO),
since the multiple nodes are interconnected via optical fibers with high synchronization accuracy.
However, the multi-node cooperative sensing in CF-mMIMO ISAC systems faces the following challenges:
1) The joint optimization of placement and resource allocation of
distributed access points (APs) to improve the sensing performance
in multi-target detection scenario is difficult;
2) The fusion of the sensing information from distributed APs 
with multi-view discrepancies is difficult.
To address these challenges,
this paper proposes a joint placement and antenna resource optimization scheme
for distributed APs to minimize the sensing Cramér-Rao bound for targets' parameters within the area of interest.
Then, a symbol-level fusion-based multi-dynamic target sensing (SL-MDTS) scheme is provided,
effectively fusing sensing information from multiple APs.
The simulation results validate the effectiveness of the joint optimization scheme
and the superiority of the SL-MDTS scheme.
Compared to state-of-the-art grid-based symbol-level sensing information fusion schemes, 
the proposed SL-MDTS scheme improves the accuracy of localization and velocity estimation
by 44\% and 41.4\%, respectively.
\end{abstract}
\begin{IEEEkeywords}
Cramér-Rao Bound (CRB),
integrated sensing and communication (ISAC),
cooperative sensing,
cell-free massive multi-input multi-output (CF-mMIMO).
\end{IEEEkeywords}

\IEEEpeerreviewmaketitle

\section{Introduction}
\subsection{Background and Motivations}
Emerging applications such as the low-altitude economy (LAE), 
autonomous vehicle networks, 
and smart cities are expected to hold the promise of significant economic and social benefits, 
which require the support of technologies
that integrate robust communication with precise sensing~\cite{feng2025}.
Integrated sensing and communication (ISAC),
as a key technology of sixth-generation mobile networks (6G),
combines communication and sensing functions by reusing spectrum and hardware resources~\cite{wei2024integrated}.
Compared to the traditional system with separate communication and sensing functions,
ISAC enables a seamless connection between precise sensing and
robust communication within a unified system
and is expected to be a core enabling technology
for the emerging applications mentioned above~\cite{wei2024integrated}.
However, the limitations of single-node sensing in ISAC systems caused by occlusion and limited
viewpoints make it difficult to meet high-precision sensing demands
and adapt to complex application environments.
Consequently, multi-node cooperative sensing has garnered
significant attention from researchers~\cite{zeng2023integrated,wei2024integrated,meng2025, wei_symbol}.
In multi-node cooperative sensing,
the synchronization accuracy between multiple nodes limits the sensing performance.
The cell-free massive multiple input multiple output (CF-mMIMO) architecture
enables multiple distributed access points (APs) connected via optical fibers
to perform cooperative sensing with high synchronization accuracy,
making it a key enabler in cooperative ISAC~\cite{Du_cell,Shi_sensing,Sakhnini_cell}.

Currently, extensive work has concentrated on
interference mitigation~\cite{Fan_inter},
beamforming design~\cite{Mao_beam,Demirhan_beam},
resource allocation~\cite{zeng2023integrated,Behdad_power},
and target detection~\cite{Sakhnini_cell,elfiatoure_detection},
establishing a robust theoretical foundation for cooperative sensing of the CF-mMIMO ISAC network.
However, there are still some challenges:
1) \textit{Joint parameter optimization}:
The joint optimization of placement and antenna resource allocation of
distributed APs to maximize the sensing performance 
in multi-target scenarios is difficult;
2) \textit{Sensing information fusion}:
The fusion of the sensing information
from distributed APs with multi-view discrepancies is difficult. 

\begin{table*}[!htbp]
\centering
\caption{Comparison of this work with existing works
on CF-mMIMO ISAC/distributed MIMO radar/multi-BS ISAC
cooperative sensing}
\label{tab1}
\resizebox{\textwidth}{!}{%
\renewcommand{\arraystretch}{1}
\begin{tabular}{|c|cccc|c|ccccc|}
\hline
\multirow[b]{2}{*}{\textbf{\begin{tabular}[c]{@{}c@{}}Existing \\ works\end{tabular}}} & \multicolumn{4}{c|}{\textbf{AP Deployment and Antenna Allocation}}                    & \multirow[b]{2}{*}{\textbf{\begin{tabular}[c]{@{}c@{}}Existing \\ works\end{tabular}}} & \multicolumn{5}{c|}{\textbf{Cooperative Sensing Signal Processing}}               \\ \cline{2-5} \cline{7-11} 
& \multicolumn{1}{c|}{\begin{tabular}[c]{@{}c@{}}Multiple\\ targets\end{tabular}} & \multicolumn{1}{c|}{\begin{tabular}[c]{@{}c@{}}MIMO \\ channel\end{tabular}} & \multicolumn{1}{c|}{\begin{tabular}[c]{@{}c@{}}Dynamic\\ target\end{tabular}} & \begin{tabular}[c]{@{}c@{}}CRB\\ criterion\end{tabular} &                                 & \multicolumn{1}{c|}{\begin{tabular}[c]{@{}c@{}}Multiple\\ targets\end{tabular}} & \multicolumn{1}{c|}{\begin{tabular}[c]{@{}c@{}}MIMO\\ channel\end{tabular}} & \multicolumn{1}{c|}{\begin{tabular}[c]{@{}c@{}}Symbol-level\\ fusion\end{tabular}} & \multicolumn{1}{c|}{\begin{tabular}[c]{@{}c@{}}Dynamic\\ target\end{tabular}} & \begin{tabular}[c]{@{}c@{}}Continuous\\ estimation\end{tabular} \\ \hline
\cite{Rui_radar}   & \multicolumn{1}{c|}{}   & \multicolumn{1}{c|}{}                     & \multicolumn{1}{c|}{}                     & \Checkmark                                & \cite{Shi_sensing,Zhang_sensing}          & \multicolumn{1}{c|}{\Checkmark }          & \multicolumn{1}{c|}{}           & 
\multicolumn{1}{l|}{}                     & \multicolumn{1}{l|}{}                     & \Checkmark  \\ \hline
\cite{Nguyen_radar} & \multicolumn{1}{c|}{}  & \multicolumn{1}{c|}{}                     & \multicolumn{1}{c|}{}                     & \Checkmark                                & \cite{Liu_sensing}                        & \multicolumn{1}{c|}{\Checkmark}          
 & \multicolumn{1}{c|}{}          & 
 \multicolumn{1}{l|}{}                    & \multicolumn{1}{l|}{}                    & \Checkmark   \\ \hline
\cite{moreno_radar} & \multicolumn{1}{c|}{\Checkmark}                                 & \multicolumn{1}{c|}{}                     & \multicolumn{1}{c|}{}                     & \Checkmark                                & \cite{wei_symbol}                         & \multicolumn{1}{l|}{}                     & \multicolumn{1}{l|}{}                     & \multicolumn{1}{c|}{\Checkmark}           & \multicolumn{1}{c|}{\Checkmark}           & \multicolumn{1}{l|}{} \\ \hline
\cite{liang_radar}                           & \multicolumn{1}{c|}{\Checkmark}           & \multicolumn{1}{c|}{}                     & \multicolumn{1}{c|}{}                     & \Checkmark                                & \cite{wei2024integrated,Lu_sensing}       & \multicolumn{1}{l|}{}                     & \multicolumn{1}{c|}{\Checkmark}           & \multicolumn{1}{c|}{\Checkmark}           & \multicolumn{1}{c|}{\Checkmark}           & \multicolumn{1}{l|}{} \\ \hline
\textbf{This work} & \multicolumn{1}{c|}{\Checkmark} & \multicolumn{1}{c|}{\Checkmark} & \multicolumn{1}{c|}{\Checkmark} & \Checkmark                 & \textbf{This work  } & \multicolumn{1}{c|}{\Checkmark} & \multicolumn{1}{c|}{\Checkmark} & \multicolumn{1}{c|}{\Checkmark} & \multicolumn{1}{c|}{\Checkmark} & \Checkmark \\ \hline
\end{tabular} }
\end{table*}

\subsection{Related Work}

Research on joint placement and antenna resource optimization for
APs and sensing information fusion
in the CF-mMIMO ISAC system is still in its early stages. 
However, related studies in distributed MIMO radar and multi-base station (BS)
cooperative sensing provide valuable insights.
Table \ref{tab1} compares this work with existing studies
on CF-mMIMO ISAC/distributed MIMO radar/multi-BS ISAC cooperative sensing.

In the AP deployment and antenna allocation field of CF-mMIMO ISAC systems,
substantial research has been conducted in areas such as
distributed MIMO radars with similar structural configurations.
Cramér-Rao Bound (CRB) is commonly used as a criterion in deployment optimization~\cite{Rui_radar,Nguyen_radar,moreno_radar,liang_radar}. 
Existing work has proposed various CRB-based schemes,
including optimizing receiver location for a single target by minimizing CRB~\cite{Rui_radar}, 
determining the optimal angular geometry by maximizing the determinant of the
Fisher information matrix (FIM) under single-target conditions~\cite{Nguyen_radar}, 
and extending these schemes to multi-target scenarios by optimizing the mean value of the FIM~\cite{moreno_radar}. 
Furthermore, the optimal placement of transmitters and receivers
for multiple static targets has been addressed~\cite{liang_radar}. 
However, these studies focus on single input single output channels, 
overlooking MIMO channels and excluding multi-dynamic target scenarios. 
Therefore, this paper adopts the CRB criterion and explores the joint optimization of
AP deployment and antenna allocation in the context of multi-dynamic targets and MIMO channels,
aiming to enhance sensing performance in the area of interest.

Cooperative sensing signal processing algorithms are rarely explored in CF-mMIMO ISAC systems, 
whereas extensive research exists in the area of multi-BS cooperative sensing.
Shi \textit{et al.} in~\cite{Shi_sensing} proposed
a circular localization method based on maximum likelihood estimation (MLE),
while Zhang \textit{ et al.} in~\cite{Zhang_sensing} introduced an
MLE-based ellipse localization method in a scenario with multiple targets.
Furthermore, Liu \textit{et al.} in~\cite{Liu_sensing} proposed a low-complexity multi-target matching algorithm.
These methods are treated as data-level fusion based target localization approaches, 
which have low sensing accuracy and are hard to meet the growing sensing demands of emerging applications, such as LAE applications.
To this end, Wei \textit{et al.} in~\cite{wei_symbol} introduced the concept of symbol-level fusion and
proposed a target sensing method for localization and absolute velocity estimation.
Subsequently, derivative algorithms based on symbol-level fusion are proposed, 
including three-dimensional (3D) sensing~\cite{Lu_sensing} 
and ellipse sensing~\cite{wei2024integrated}. 
However, these symbol-level fusion methods
rely on grid-based search,
which suffers from severe off-grid problems,
leading to insufficient sensing accuracy.
Therefore, this paper considers symbol-level fusion with
better fusion performance than data-level fusion to
meet the sensing requirements of emerging applications and
proposes a multi-dynamic target sensing method,
addressing the off-grid issue and significantly improving sensing accuracy.

\subsection{Our Contributions}
We consider cooperative sensing of the CF-mMIMO ISAC system
in a multi-dynamic target scenario,
where multiple APs cooperatively sense the target area.
A two-phase cooperative sensing framework is developed that
integrates joint optimization and cooperative sensing signal processing 
to improve the sensing capabilities of CF-mMIMO ISAC systems, 
satisfying the sensing requirements of emerging applications. 
The main contributions of this paper are summarized as follows.
\begin{itemize}
    \item \textbf{\textit{Two-phase cooperative sensing framework:}} 
    We first present a joint optimization scheme to minimize the sensing CRBs,
    providing system parameters for high-accuracy cooperative sensing, 
    including the placement and number of antennas for distributed APs. 
    Then, we present a cooperative sensing signal processing scheme to achieve high-accuracy sensing of multiple dynamic targets.
    \item \textbf{\textit{Joint optimization scheme:}} 
     A joint deployment and antenna resource optimization scheme
     for distributed APs is proposed to improve the theoretical sensing performance.
     Specifically, we derive the CRBs for localization and absolute velocity estimation under multi-AP cooperation. 
     Using the CRB criterion, we formulate a joint optimization problem, 
     solved via the alternating direction method of multipliers (ADMM), 
     the analytical method, and the truncated Newton method.
     The scheme improves the theoretical sensing performance and provides 
     the system parameters for high-accuracy cooperative sensing.
    \item \textbf{\textit{Cooperative sensing signal processing scheme:}} 
    We propose a symbol-level fusion-based multi-dynamic target sensing (SL-MDTS) scheme 
    for high-precision localization and absolute velocity estimation, 
    which comprises two stages: signal preprocessing and symbol-level fusion-based sensing. 
    In signal preprocessing stage, multi-target echo signals 
    are separated and associated, 
    transforming the multi-target sensing problem into parallel single-target sensing problems. 
    The symbol-level fusion-based sensing stage proposes a sensing method based on optimization algorithm and symbol-level fusion, 
    referred to as ``symbol-level fusion-based optimization (SFO) method'', 
    enabling efficient sensing information fusion and continuous estimations of target's location and absolute velocity.
    \item \textbf{\textit{Performance Verification:}} 
    The simulation results demonstrate the feasibility and superiority of the proposed two-phase cooperative sensing framework. 
    First, we confirm that the joint optimization scheme can improve the theoretical sensing performance. 
    Then, simulations validate that the proposed SL-MDTS scheme improves localization and velocity estimation accuracy
    by 44\% and 41.4\% respectively, compared to the state-of-the-art grid-based symbol-level sensing information fusion schemes. 
    Finally, the simulation results reveal that the joint optimization of system parameters can further improve the performance of cooperative sensing.
\end{itemize}

The remainder of this paper is structured as follows. 
Section~\ref{se2} outlines the system model, 
including signal model and performance metrics. 
In Section~\ref{se3}, we present a joint optimization scheme. 
Section~\ref{se4} proposes a cooperative sensing signal processing scheme. 
Section \ref{se5} details the simulation results, 
and Section \ref{se6} summarizes this paper.

\textit{Notations:} $\{\cdot\}$ typically represents a set of various index values. 
$\{A_i\}_{i=1}^I$ denotes the set of $I$ elements.
Vectors and matrices are written in bold letters and in capital bold letters, respectively. 
$\left[A_i\right]|_{i=1:L}$ denotes a vector consisting of the elements $A_1, A_2,\cdots,A_L$.
$\mathbb{C}$ and $\mathbb{R}$ denote the set of complex and real numbers, respectively. 
$\left[\cdot\right]^{\text{T}}$, $\left[\cdot\right]^{-1}$, $|\cdot|$, 
and $\langle\cdot\rangle$ stand for the transpose operator, the inverse operator, 
the absolute operator, and the inner product, respectively. 
$\text{tr}\left(\cdot\right)$ represents the trace of a matrix. 
$\circ $, $\odot$, and $\otimes$ denote the outer product, the Hadamard product, and the Khatri-Rao product, respectively. 
$\left\|\cdot\right\|_{2}$ is the Frobenius norm. 
A complex Gaussian random variable $\mathbf{u}$ with mean $\mu_u$ and variance $\sigma_u^2$ 
is denoted by $\mathbf{u} \sim \mathcal{CN}\left(\mu_u,\sigma_u^2\right)$.

\section{System Model}\label{se2}
As shown in Fig. \ref{fig1}, we consider a cooperative sensing scenario within a CF-mMIMO ISAC system, with $L$ APs cooperatively sensing the multiple targets.
The $l \in\{1,2,\cdots, L\}$-th AP, located at $\mathbf{c}_l=\left[x_l,y_l\right]^\text{T}$, is equipped full-duplex uniform linear arrays with antenna spacing $d_\text{r}$ and $N_\text{A}^l$ antennas, operating in time-division duplex for communication and sensing~\cite{elfiatoure_detection}.
Each AP transmits and receives ISAC signals for cooperative sensing, transmitting results to a central processing unit for fusion~\cite{wei_symbol,Mao_beam}. 
We consider $U$ targets in the sensing area, with the location and absolute velocity of the $u\in\{1,2,\cdots, U\}$-th target 
denoted as $\mathbf{t}_u=\left[x_u^\text{tr}, y_u^\text{tr}\right]^\text{T}$ and $\mathbf{v}_u=\left[v_u, \theta_u\right]^\text{T}$, respectively.
The timing information is transmitted to the APs via fronthaul links to synchronize them~\cite{Zhang_sensing}. 
Although this study focuses on a two-dimensional scenario, the approach extends to a three-dimensional scenario.
\begin{figure}
    \centering    \includegraphics[width=0.40\textwidth]{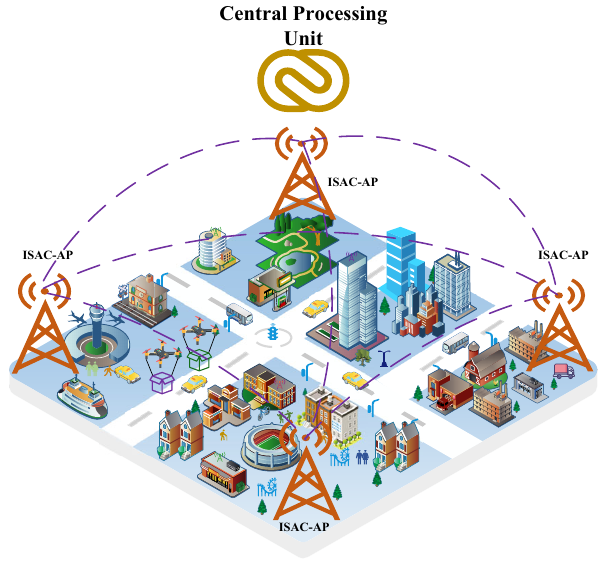}
    \caption{CF-mMIMO ISAC system cooperative sensing scenario}
    \label{fig1}
\end{figure}

\subsection{Sensing Signal Model}
Each AP occupies $N_\text{c}$ subcarriers and $M$ OFDM symbols for sensing, 
employing orthogonal waveform to avoid interference~\cite{chen2025}.
For the $l$-th AP, the ISAC echo signal in the $n\in\{1,2,\cdots,N_\text{c}\}$-th subcarrier during the $m\in\{1,2,\cdots,M\}$-th OFDM symbol period is expressed as
\begin{equation}\label{eq1}
  \mathbf{y}_{n,m}^l = \sqrt{P_\text{t}}\sum_{u=1}^{U}
\left[\begin{array}{l}
\alpha_l^u e^{j2\pi f_{\text{D},l}^u m T} e ^{-j2\pi n \Delta f \tau_l^u} \\ \times \mathbf{a}_\text{r}\left(\theta_l^u\right)
\chi_\text{t}^u d_{n,m}^l + \mathbf{z}_{n,m}^{l,u}
\end{array}\right],  
\end{equation}
where $\mathbf{y}_{n,m}^l \in\mathbb{C}^{N_\text{A}^l \times 1}$, 
and $\alpha_l^u = \sqrt{\frac{\lambda^2}{\left(4\pi\right)^3\left(r_l^u\right)^4}}\beta_S^u$ 
denotes the attenuation between the $l$-th AP and the $u$-th target with $\beta_S^u\sim \mathcal{CN}\left(0,\sigma_\beta^2\right)$ 
being the radar cross section~\cite{liu2024carrier}; 
$P_\text{t}$ is the transmit power, where each AP is assumed to have equal power; 
$\lambda = \frac{c}{f_\text{c}}$ is wavelength with $c$ and $f_\text{c}$ being the speed of light and carrier frequency, respectively; 
$\tau_l^u = \frac{2r_l^u}{c}$ represents the delay with $r_l^u = \|\mathbf{c}_l-\mathbf{t}_u\|_2$ being the distance between the $l$-th AP and the $u$-th target; $f_{\text{D},l}^u=\frac{-2f_\text{c}v_u\cos\left(\theta_l^u-\theta_u\right)}{c}$ is Doppler frequency shift, 
$\Delta f$ is subcarrier spacing, and $T$ is total duration of OFDM symbol; 
$\theta_l^u$ is the angle of arrive (AoA) between the $l$-th AP and the $u$-th target; 
$\chi_\text{t}^u = \mathbf{a}_\text{t}^\text{T}\left(\theta_l^u\right)\mathbf{w}_{\text{t}}^u$ is transmit beam gain with $\mathbf{w}_{\text{t}}^u\in\mathbb{C}^{N_\text{A}^l\times 1}$ being the transmit beamforming vector~\cite{Mao_beam}; 
$d_{n,m}^l$ is known pilot data and $\mathbf{z}_{n,m}^{l,u}\sim\mathcal{CN}\left(0,\sigma_l^2\right)$ is an additive white Gaussian noise (AWGN) vector; 
$\mathbf{a}_\text{r}\left(\cdot\right)$ and $\mathbf{a}_\text{t}\left(\cdot\right)$ are the transmit and receive steer vectors, expressed in (\ref{eq2}) and (\ref{eq3}), respectively. 
\begin{equation}\label{eq2}
\mathbf{a}_\text{r}\left(\cdot\right) = \left[e^{j2\pi p(\frac{d_\text{r}}{\lambda})\sin(\cdot)}\right]^\text{T}|_{p=1:N_\text{A}^l},
\end{equation}
\begin{equation}\label{eq3}
\mathbf{a}_\text{t}\left(\cdot\right) = \left[e^{j2\pi q(\frac{d_\text{r}}{\lambda})\sin(\cdot)}\right]^\text{T}|_{q=1:N_\text{A}^l}.
\end{equation}

To optimize deployment and antenna resource for distributed APs, we derived CRBs 
for localization and absolute velocity estimation under cooperative sensing. 
These CRBs offer fundamental sensing performance metrics, 
establishing the lower bounds for unbiased estimation.

\subsection{Performance Metrics for Cooperative Sensing Scenario}\label{se2B}
The CRBs for localization and absolute velocity estimation in the CF-mMIMO ISAC cooperative sensing scenario are derived as follows.

\subsubsection{CRB of localization}
For the $u$-th target, the delay, angle, and Doppler information carried in the ISAC echo signal are related to unknown estimation vector $\mathbf{t}_u$. 
Therefore, we define a transform vector 
$\mathbf{t}_u^\text{N}=\left[\mathbf{q}, \mathbf{r}, \mathbf{s}\right]$, where
\begin{equation}\label{eq4}
  \begin{aligned} 
\mathbf{q} & = \left[\tau_1^u,\tau_2^u,\cdots,\tau_L^u\right], \quad \mathbf{s}  = \left[f_{\text{D},1}^u,f_{\text{D},2}^u,\cdots,f_{\text{D},L}^u\right], \\
 \mathbf{r} & = \left[\sin(\theta_1^u),\sin(\theta_2^u),\cdots,\sin(\theta_L^u)\right], 
\end{aligned}  
\end{equation}
and give a \hyperref[lemm1]{Lemma 1} as follows.
\begin{lemma}\label{lemm1}
    Given the FIM and Jacobian matrix about $\mathbf{t}_u^\text{N}$, 
    we can use the chain rule to obtain the localization CRB of $\mathbf{t}_u^\text{N}$
    \begin{equation}\label{eq5}
        \mathrm{CRB}_\text{p} = \left[\mathbf{P}\mathbf{J}\left(\mathbf{t}_u^\text{N}\right)\mathbf{P}^\text{T}\right]^{-1},
    \end{equation}
    where $\mathbf{J}\left(\cdot\right)\in\mathbb{R}^{3L\times 3L}$ is a FIM and $\mathbf{P}\in\mathbb{R}^{2\times 3L}$ is a Jacobian matrix~\cite{godrich2010target}.
\end{lemma}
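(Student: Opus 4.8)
The plan is to obtain the localization CRB by propagating the Fisher information from the intermediate ``natural'' parameter vector $\mathbf{t}_u^\text{N}=[\mathbf{q},\mathbf{r},\mathbf{s}]$ down to the two-dimensional position $\mathbf{t}_u$ via the parameter-transformation property of the FIM. First I would write the echo observations $\{\mathbf{y}_{n,m}^l\}$, collected across all subcarriers, OFDM symbols and APs, as a complex Gaussian vector whose mean is the deterministic signal in~(\ref{eq1}) and whose covariance is the AWGN covariance. Because this signal depends on the target only through the delays, angle-sines and Doppler shifts, the log-likelihood $\tilde{\ell}$ is naturally a function of $\mathbf{t}_u^\text{N}$, and its $3L\times 3L$ Fisher information $\mathbf{J}(\mathbf{t}_u^\text{N})$ is exactly the matrix assumed given in the statement.

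Next I would invoke the chain rule for the score. Every component of $\mathbf{t}_u^\text{N}$ is a smooth function of the position through the geometry $\tau_l^u=2\|\mathbf{c}_l-\mathbf{t}_u\|_2/c$, $\sin(\theta_l^u)$ and $f_{\text{D},l}^u$, so with $\mathbf{P}=\left(\partial\mathbf{t}_u^\text{N}/\partial\mathbf{t}_u\right)^\text{T}\in\mathbb{R}^{2\times 3L}$ the Jacobian, the position score satisfies $\nabla_{\mathbf{t}_u}\ell=\mathbf{P}\,\nabla_{\mathbf{t}_u^\text{N}}\tilde{\ell}$. Since $\mathbf{P}$ is deterministic, taking the expected outer product of both sides immediately yields
\begin{equation*}
\mathbf{J}(\mathbf{t}_u)=\mathbb{E}\!\left[\nabla_{\mathbf{t}_u}\ell\,\left(\nabla_{\mathbf{t}_u}\ell\right)^\text{T}\right]=\mathbf{P}\,\mathbf{J}(\mathbf{t}_u^\text{N})\,\mathbf{P}^\text{T},
\end{equation*}
a $2\times 2$ Fisher information for the position.

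Finally I would apply the Cram\'er--Rao inequality: for any unbiased estimator of $\mathbf{t}_u$, the error covariance is lower bounded by $\mathbf{J}(\mathbf{t}_u)^{-1}$, which gives $\mathrm{CRB}_\text{p}=\left[\mathbf{P}\,\mathbf{J}(\mathbf{t}_u^\text{N})\,\mathbf{P}^\text{T}\right]^{-1}$ as in~(\ref{eq5}). The invertibility here requires $\mathbf{P}$ to have full row rank and $\mathbf{J}(\mathbf{t}_u^\text{N})$ to be positive definite, both of which hold for a non-degenerate AP--target geometry. I expect the genuine labor --- hidden by the phrase ``given the FIM'' --- to lie not in the chain rule but in evaluating $\mathbf{J}(\mathbf{t}_u^\text{N})$ in closed form: one must differentiate the OFDM echo model with respect to delay, angle and Doppler and sum the resulting terms over $n$, $m$ and the antenna index, which is where the detailed bookkeeping of the multi-AP MIMO signal structure concentrates.
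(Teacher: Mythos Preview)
Your argument is correct and is precisely the standard derivation of the parameter-transformation property of the FIM: write the score in the original parametrization, apply the chain rule $\nabla_{\mathbf{t}_u}\ell=\mathbf{P}\,\nabla_{\mathbf{t}_u^\text{N}}\tilde{\ell}$, take the expected outer product to get $\mathbf{J}(\mathbf{t}_u)=\mathbf{P}\,\mathbf{J}(\mathbf{t}_u^\text{N})\,\mathbf{P}^\text{T}$, and invert. Your remark on invertibility (full row rank of $\mathbf{P}$, positive definiteness of $\mathbf{J}(\mathbf{t}_u^\text{N})$) is also to the point.

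The only difference from the paper is that the paper does not actually prove Lemma~1 at all: it states the result and cites~\cite{godrich2010target} as its source, treating the chain-rule identity as known. The work the paper does do --- computing the block entries $\mathbf{A},\ldots,\mathbf{F}$ of $\mathbf{J}(\mathbf{t}_u^\text{N})$ and the explicit Jacobian $\mathbf{P}$ --- appears in Appendix~A as the proof of Theorem~1, which is exactly where you anticipated the ``genuine labor'' would sit. So your proposal supplies a self-contained justification that the paper elects to outsource to a reference; the approaches are not in tension, yours is simply more explicit.
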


Using \hyperref[lemm1]{Lemma 1}, the CRB of localization is given by the following theorem.
\begin{theorem}\label{theorem1}
    For the $u$-th target, the CRB of localization in CF-mMIMO ISAC cooperative sensing scenario is 
    \begin{equation}\label{eq6}
\mathrm{CRB}_\mathrm{p}\left(\mathbf{v}_u,\mathbf{t}_u\right) = \left[\mathbf{P}\left[\begin{array}{ccc}
   \mathbf{A}  & \mathbf{B} & \mathbf{C} \\
   \mathbf{B}  & \mathbf{D} & \mathbf{E} \\
   \mathbf{C}  & \mathbf{E} & \mathbf{F}
\end{array}\right]\mathbf{P}^\text{T}\right]^{-1},
    \end{equation}
    where $\mathbf{A}$, $\mathbf{B}$, $\mathbf{C}$, $\mathbf{D}$, $\mathbf{E}$, and $\mathbf{F}$ are all 
    the $L \times L$ diagonal matrices whose diagonal elements are expressed in \eqref{eq7} - \eqref{eq12}, respectively.
    The close form of $\mathbf{P}$ is expressed in (\ref{eq13}) shown at the top of this page, 
    where $\eta_l^{u}=\sqrt{\left(\Theta_l^u\right)^2+\left(\epsilon_l^{u}\right)^2}$, $\epsilon_l^{u}=\left(y_u^{\text{tr}}-y_{l}\right)$, $\Theta_l^u =\left(x_u^{\text{tr}}-x_{l}\right)$, and  $\Xi_l^{u}=f_{\text{D},l}^{u}
\tan\left(\theta_l^{u}-\theta_u\right)$. 
    \begin{equation}\label{eq7}
       {\fontsize{9}{9}
 \left[\mathbf{A}\right]_{l,l} =\frac{4\pi^2\left|\alpha_l^{u}\right|^2\Delta f^2(2N_\text{c}+1)(N_\text{c}+1)N_\text{c}MN_\text{A}^l}{6\sigma_\text{z}^2}, } 
    \end{equation}
    \begin{equation}\label{eq8}
     {\fontsize{9}{9} 
 \left[\mathbf{B}\right]_{l,l} =\frac{-4\pi^2\left|\alpha_l^{u}\right|^2\frac{d_\text{r}}{\lambda}\Delta f(N_\text{A}^l+1)N_\text{A}^l(N_\text{c}+1)N_\text{c}M}{4\sigma_\text{z}^2}, }   
    \end{equation}
    \begin{equation}\label{eq9}
      {\fontsize{9}{9}
 \left[\mathbf{C}\right]_{l,l} =\frac{-4\pi^2\left|\alpha_l^{u}\right|^2T\Delta f(M+1)M(N_\text{c}+1)N_\text{c}N_\text{A}^l}{4\sigma_\text{z}^2}, }  
    \end{equation}
    \begin{equation}\label{eq10}
        {\fontsize{9}{9}
          \left[\mathbf{D}\right]_{{l},{l}}=\frac{4\pi^2\left|\alpha_l^{u}\right|^2\left(\frac{d_\text{r}}{\lambda}\right)^2(2N_\text{A}^l+1)(N_\text{A}^l+1)N_\text{A}^lN_\text{c}M}{6\sigma_\text{z}^2},}
    \end{equation}
    \begin{equation}\label{eq11}
      \left[\mathbf{E}\right]_{{l},{l}} =\frac{4\pi^2\left|\alpha_l^{u}\right|^2\frac{d_\text{r}}{\lambda}T(N_\text{A}^l+1)N_\text{A}^l(M+1)MN_\text{c}}{4\sigma_\text{z}^2},   
    \end{equation}
    \begin{equation}\label{eq12}
     \left[\mathbf{R}\right]_{{l},{l}}=\frac{4\pi^2\left|\alpha_l^{u}\right|^2T^2(2M+1)(M+1)MN_\text{A}^lN_\text{c}}{6\sigma_\text{z}^2}.  
    \end{equation}
    \begin{figure*}[htbp]
    \renewcommand{\arraystretch}{2.5} 
    \begin{equation}\label{eq13}
         \mathbf{P}=\frac{\partial\mathbf{t}_u^\text{N}}{\mathbf{t}_u}=\left[\begin{array}{c@{\hspace{3pt}}c@{\hspace{3pt}}cc@{\hspace{3pt}}c@{\hspace{3pt}}cc@{\hspace{3pt}}c@{\hspace{3pt}}cc@{\hspace{3pt}}c@{\hspace{3pt}}c}
         \dfrac{2\Theta_1^u} {\eta_1^{u}c} & 
         \dfrac{2\Theta_2^u} {\eta_2^{u}c} &
         \cdots &
         \dfrac{2\Theta_L^u} {\eta_L^{u}c} &
         \dfrac{\epsilon_1^{u}\Theta_1^u}{-(\eta_1^{u})^3} &
         \dfrac{\epsilon_2^{u}\Theta_2^u}{-(\eta_2^{u})^3} &
         \cdots&
         \dfrac{\epsilon_L^{u}\Theta_L^u}{-(\eta_L^{u})^3} &
         \dfrac{\Xi_1^{u}\epsilon_1^{u}}{(\eta_1^{u})^2}&
         \dfrac{\Xi_2^{u}\epsilon_2^{u}}{(\eta_2^{u})^2}&
         \cdots&
         \dfrac{\Xi_L^{u}\epsilon_L^{u}}{(\eta_L^{u})^2} 

         \\ \dfrac{2\epsilon_1^{u}}{\eta_1^{u}c}&
         \dfrac{2\epsilon_2^{u}}{\eta_2^{u}c}&
         \cdots&
         \dfrac{2\epsilon_L^{u}}{\eta_L^{u}c}&
         \dfrac{\left(\Theta_1^u\right)^{2}}{\left(\eta_1^{u}\right)^{3}}&
         \dfrac{\left(\Theta_2^u\right)^{2}}{\left(\eta_2^{u}\right)^{3}}&
         \cdots&
         \dfrac{\left(\Theta_L^u\right)^{2}}{\left(\eta_L^{u}\right)^{3}}&
         \dfrac{-\Xi_1^{u}}{\Theta_1^u}&
         \dfrac{-\Xi_2^{u}}{\Theta_2^u}&
         \cdots&
         \dfrac{-\Xi_L^{u}}{\Theta_L^u}\\
        \end{array}\right].   
    \end{equation}
{\noindent} \rule[-10pt]{18cm}{0.1em}
\end{figure*}
\end{theorem}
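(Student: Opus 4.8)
The plan is to supply the two ingredients that Lemma 1 leaves unspecified — the $3L\times 3L$ Fisher information matrix $\mathbf{J}(\mathbf{t}_u^{\text{N}})$ in the intermediate (delay, angle-sine, Doppler) parametrization and the $2\times 3L$ Jacobian $\mathbf{P}=\partial\mathbf{t}_u^{\text{N}}/\partial\mathbf{t}_u$ — and then assemble \eqref{eq6} by the chain rule. First I would regard the echo in \eqref{eq1} as a complex Gaussian observation whose mean $\boldsymbol{\mu}_{n,m}^{l,u}$ depends on the unknowns only through the three scalars $\tau_l^u$, $\sin\theta_l^u$, and $f_{\text{D},l}^u$, and apply the Slepian--Bangs formula $[\mathbf{J}]_{ij}=\tfrac{2}{\sigma_{\text{z}}^2}\Re\{(\partial\boldsymbol{\mu}/\partial\psi_i)^{\mathrm{H}}(\partial\boldsymbol{\mu}/\partial\psi_j)\}$, with the inner product summed over the subcarrier index $n$, the symbol index $m$, and the receive antennas $p$.

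The required derivatives are elementary: differentiating with respect to $\tau_l^u$ pulls down the scalar $-j2\pi n\Delta f$, with respect to $f_{\text{D},l}^u$ the scalar $j2\pi mT$, and with respect to $\sin\theta_l^u$ the per-antenna factor $j2\pi p(d_{\text{r}}/\lambda)$ acting on $\mathbf{a}_{\text{r}}$. Because the APs transmit mutually orthogonal waveforms, mean-derivative inner products belonging to different APs vanish, so the six blocks $\mathbf{A},\dots,\mathbf{F}$ are all diagonal and only the per-AP entries are needed. Each entry then factors into closed-form arithmetic sums, namely $\sum_{n=1}^{N_{\text{c}}}n^2=\tfrac{N_{\text{c}}(N_{\text{c}}+1)(2N_{\text{c}}+1)}{6}$, $\sum_{n=1}^{N_{\text{c}}}n=\tfrac{N_{\text{c}}(N_{\text{c}}+1)}{2}$ and their analogues over $m$ and over $p$, multiplied by the unit-modulus steering-vector factor $N_{\text{A}}^l$; collecting these reproduces \eqref{eq7}--\eqref{eq12}. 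The sign pattern follows from which conjugated derivative supplies $+j$ versus $-j$: the delay--angle and delay--Doppler cross blocks $\mathbf{B},\mathbf{C}$ acquire a minus sign from $(+j)(+j)=-1$, whereas the angle--Doppler block $\mathbf{E}$ stays positive since $(-j)(+j)=+1$.

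Next I would construct $\mathbf{P}$ column by column, differentiating each transform component with respect to $x_u^{\text{tr}}$ and $y_u^{\text{tr}}$ using $r_l^u=\eta_l^u$, $\tau_l^u=2\eta_l^u/c$, and $\sin\theta_l^u=\epsilon_l^u/\eta_l^u$. The delay and angle columns follow at once from $\partial\eta_l^u/\partial x=\Theta_l^u/\eta_l^u$ and $\partial\eta_l^u/\partial y=\epsilon_l^u/\eta_l^u$. I expect the Doppler columns to be the main obstacle, since $f_{\text{D},l}^u$ depends on position only through the geometric angle $\theta_l^u=\arctan(\epsilon_l^u/\Theta_l^u)$: I must chain $\partial f_{\text{D},l}^u/\partial(\cdot)=-\tfrac{2f_{\text{c}}v_u}{c}\sin(\theta_l^u-\theta_u)\,\partial\theta_l^u/\partial(\cdot)$ through $\partial\theta_l^u/\partial x=-\epsilon_l^u/(\eta_l^u)^2$ and $\partial\theta_l^u/\partial y=\Theta_l^u/(\eta_l^u)^2$, and then recognize the recurring factor $-\tfrac{2f_{\text{c}}v_u}{c}\sin(\theta_l^u-\theta_u)=f_{\text{D},l}^u\tan(\theta_l^u-\theta_u)=\Xi_l^u$ to fold the result into the compact entries of \eqref{eq13}. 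Substituting the assembled $\mathbf{J}(\mathbf{t}_u^{\text{N}})$ and $\mathbf{P}$ into the chain-rule expression of Lemma 1 then yields \eqref{eq6}, completing the proof.
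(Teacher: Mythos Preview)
Your proposal is correct and follows the same route as the paper's Appendix~A --- compute the FIM in the $(\tau,\sin\theta,f_{\text{D}})$ parametrization, compute the Jacobian $\mathbf{P}$ from the geometric relations linking $\mathbf{t}_u$ to $\mathbf{t}_u^{\text{N}}$, and invoke the chain rule of Lemma~1 --- and is in fact more explicit than the appendix, which merely states the block FIM without carrying out the Slepian--Bangs arithmetic you spell out. One small remark: the diagonal block structure follows because AP $l$'s observation depends only on its own triple $(\tau_l^u,\sin\theta_l^u,f_{\text{D},l}^u)$, so cross-AP mean derivatives vanish identically; waveform orthogonality is what justifies the interference-free model \eqref{eq1} upstream, not the diagonality itself.
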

\begin{proof}
    Please refer to \hyperref[apA]{Appendix A}
\end{proof}
s
\subsubsection{CRB of absolute velocity estimation}
For the $u$-th target, the delay, angle, 
and Doppler information are related to the unknown estimate vector $\mathbf{v}_u $. 
The CRB of absolute velocity estimation differs from that for localization only in the Jacobian matrix. 
Therefore, we derive the following theorem.
\begin{theorem}\label{theorem2}
    For the $u$-th target, the CRB of absolute velocity estimation in CF-mMIMO ISAC cooperative sensing scenario is 
    \begin{equation}\label{eq14}
\mathrm{CRB}_\text{a}\left(\mathbf{v}_u,\mathbf{t}_u\right) = \left[\mathbf{T}\left[\begin{array}{ccc}
   \mathbf{A}  & \mathbf{B} & \mathbf{C} \\
   \mathbf{B}  & \mathbf{D} & \mathbf{E} \\
   \mathbf{C}  & \mathbf{E} & \mathbf{F}
\end{array}\right]\mathbf{T}^\text{T}\right]^{-1}, 
    \end{equation}
    where the close form of $\mathbf{T}$ is expressed in (\ref{eq15}), 
    $\varpi_l^u = \cos\left(\theta_l^u-\theta_u\right)$, and $\chi_l^u = \sin\left(\theta_l^u-\theta_u\right)$.
    \begin{figure*}
      \renewcommand{\arraystretch}{2.5}
      \begin{equation}\label{eq15}
       \mathbf{T} =\frac{\partial\mathbf{t}_u^\text{N}}{\mathbf{v}_u} =\left[\begin{array}{c@{\hspace{1.5pt}}c@{\hspace{1.5pt}}cc@{\hspace{1.5pt}}c@{\hspace{1.5pt}}cc@{\hspace{1.5pt}}c@{\hspace{1.5pt}}cc@{\hspace{1.5pt}}c@{\hspace{1.5pt}}c}
       \dfrac{\epsilon_1^u\Theta_1^u f_{\text{D},1}^u}{\eta_1^u \Xi_1^u c} &
       \dfrac{\epsilon_2^u\Theta_2^u f_{\text{D},2}^u}{\eta_2^u \Xi_2^u c} &
       \cdots &
       \dfrac{\epsilon_L^u\Theta_L^u f_{\text{D},L}^u}{\eta_L^u \Xi_L^u c} &
       \dfrac{\left(\varpi_1^u\right)^2}{v_u\chi_1^u} &
       \dfrac{\left(\varpi_2^u\right)^2}{v_u\chi_2^u} &
        \cdots &
       \dfrac{\left(\varpi_L^u\right)^2}{v_u\chi_L^u} &
       \dfrac{-2f_\text{c}\varpi_1^u}{c} &
       \dfrac{-2f_\text{c}\varpi_2^u}{c} &
       \cdots &
       \dfrac{-2f_\text{c}\varpi_L^u}{c} 
       \\
       \dfrac{\epsilon_1^u\Theta_1^u}{2 \eta_1^u c} &
       \dfrac{\epsilon_2^u\Theta_2^u}{2 \eta_2^u c} &
       \cdots &
       \dfrac{\epsilon_L^u\Theta_L^u}{2 \eta_L^u c} &
       \cos\left(\theta_1^u\right) &
       \cos\left(\theta_2^u\right) &
       \cdots &
       \cos\left(\theta_L^u\right) &
       \dfrac{-2f_\text{c} v_u \chi_1^u}{c} &
       \dfrac{-2f_\text{c} v_u \chi_2^u}{c} &
       \cdots &
       \dfrac{-2f_\text{c} v_u \chi_L^u}{c} \\
        \end{array}\right]. 
      \end{equation}
{\noindent} \rule[-10pt]{18cm}{0.1em}
    \end{figure*}
    
\end{theorem}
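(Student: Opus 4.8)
The plan is to reuse the structural observation that already underlies Theorem~\ref{theorem1}: the likelihood of the echo data in \eqref{eq1} depends on the unknown target parameters \emph{only} through the per-AP delays, (sines of) angles, and Doppler shifts collected in the intermediate vector $\mathbf{t}_u^\text{N}=[\mathbf{q},\mathbf{r},\mathbf{s}]$. Hence the Fisher information matrix $\mathbf{J}(\mathbf{t}_u^\text{N})$ with respect to $\mathbf{t}_u^\text{N}$ is exactly the block matrix with diagonal blocks $\mathbf{A},\dots,\mathbf{F}$ derived in Appendix A, and it does not depend on whether we subsequently reparameterize in terms of the location $\mathbf{t}_u$ or the absolute velocity $\mathbf{v}_u$. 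This is the precise meaning of the remark that the velocity CRB ``differs from that for localization only in the Jacobian matrix,'' and it is what lets me carry the central matrix over verbatim.

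First I would invoke the chain-rule transformation of Lemma~\ref{lemm1}, but with the velocity Jacobian $\mathbf{T}=\partial\mathbf{t}_u^\text{N}/\partial\mathbf{v}_u$ in place of $\mathbf{P}$. Since $\mathbf{J}(\mathbf{t}_u^\text{N})$ is unchanged, this immediately yields $\mathrm{CRB}_\text{a}=[\mathbf{T}\,\mathbf{J}(\mathbf{t}_u^\text{N})\,\mathbf{T}^\text{T}]^{-1}$, reducing the theorem to the explicit evaluation of the $2\times 3L$ matrix $\mathbf{T}$ claimed in \eqref{eq15}. The Doppler columns are straightforward: differentiating $f_{\text{D},l}^u=-2f_\text{c}v_u\cos(\theta_l^u-\theta_u)/c$ with respect to $v_u$ and $\theta_u$ directly gives $-2f_\text{c}\varpi_l^u/c$ and $-2f_\text{c}v_u\chi_l^u/c$ once I write $\varpi_l^u=\cos(\theta_l^u-\theta_u)$ and $\chi_l^u=\sin(\theta_l^u-\theta_u)$, so those entries fall out essentially for free.

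The remaining and genuinely delicate task is reproducing the \emph{nonzero} delay and angle entries of $\mathbf{T}$. Taken literally, $\tau_l^u$ and $\sin\theta_l^u$ are purely geometric and independent of the instantaneous velocity, so a naive ``hold the position fixed'' reading would make those columns vanish; the fact that \eqref{eq15} shows nonzero values signals that the velocity dependence is propagated through the coupled cooperative geometry (the relative angles $\theta_l^u-\theta_u$ and the ranges $\eta_l^u$) rather than treating location and velocity as decoupled. I would therefore chain the derivatives through these geometric quantities, then collapse the resulting trigonometric expressions using the identity $f_{\text{D},l}^u/\Xi_l^u=\cot(\theta_l^u-\theta_u)=\varpi_l^u/\chi_l^u$ to reach the compact forms in \eqref{eq15}. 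I expect this bookkeeping to be the main obstacle: one must be completely consistent about which dependence of each intermediate quantity on $(v_u,\theta_u)$ is being differentiated, and the final condensation into the $\varpi_l^u,\chi_l^u,\Xi_l^u$ notation is exactly where sign and scaling errors are most likely. Once $\mathbf{T}$ is verified entrywise, substituting it together with the block FIM into $[\mathbf{T}\,\mathbf{J}(\mathbf{t}_u^\text{N})\,\mathbf{T}^\text{T}]^{-1}$ closes the argument.
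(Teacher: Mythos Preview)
Your proposal is correct and follows essentially the same route as the paper: the paper's proof is a single sentence stating that, based on the geometric relations \eqref{eqap6}, one derives the closed form of $\mathbf{T}$ in \eqref{eq15}, which is precisely the chain-rule reparameterization of Lemma~\ref{lemm1} with the velocity Jacobian substituted for $\mathbf{P}$. Your elaboration---carrying over the block FIM unchanged, reading off the Doppler columns directly from $f_{\text{D},l}^u=-2f_\text{c}v_u\cos(\theta_l^u-\theta_u)/c$, and flagging the delay/angle columns as the place where the implicit coupling must be traced---is exactly the work the paper leaves to the reader, and your caution about sign and scaling in the $\varpi_l^u,\chi_l^u,\Xi_l^u$ condensation is well placed.
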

\begin{proof}
    Based on the relationship in (\ref{eqap6}), we derive the close form of $\mathbf{T}$ expressed in (\ref{eq15}) shown at the top of this page.
\end{proof}

To improve cooperative sensing performance in the CF-mMIMO ISAC system, 
we propose a two-phase cooperative sensing framework. 
\textit{Phase 1:} Jointly optimizes the placement and antenna resource for distributed APs to lower the sensing CRBs,
providing better system parameters for cooperative sensing, including the placement and the number
of antennas for distributed APs (please refer to Section~\ref{se3}). 
\textit{Phase 2:} an SL-MDTS scheme is presented to achieve precise sensing of multiple targets (see Section~\ref{se4}).

\section{A Joint Optimization Scheme for Sensing Performance}\label{se3}
To address the challenge of joint optimization in the CF-mMIMO ISAC system,
we formulate a joint placement and antenna resource optimization scheme
for distributed APs to lower the sensing CRBs~\cite{willem2012minimax}, 
solved using ADMM~\cite{boyd2011distributed}, analytical methods, 
convex optimization (CVX) toolbox~\cite{grant2014cvx}, and the truncated Newton method~\cite{nocedal1999numerical}.

\subsection{Problem Formulation}\label{se3A}
Using the CRB criterion, the objective function is the weighted sum of CRBs 
for localization and absolute velocity estimation, 
with a weighting factor balancing their contributions.
For convenience, we define an optimization variable $\mathbf{z}=\left[\mathbf{c}_1^\text{T},\mathbf{c}_2^\text{T},\cdots,\mathbf{c}_L^\text{T},N_\text{A}^1,N_\text{A}^2,\cdots,N_\text{A}^L\right]^\text{T}$, and the CRBs of localization and absolute velocity estimation are denoted by $\mathbf{B}_\text{p}\left(\mathbf{v}_u,\mathbf{t}_u,\mathbf{z}\right)$ and $\mathbf{B}_\text{a}\left(\mathbf{v}_u,\mathbf{t}_u,\mathbf{z}\right)$, respectively. 
Given that APs need proper dispersion to ensure sufficient coverage, 
the feasible location of the $l$-th AP is constrained within a region centered at $\hat{\mathbf{c}}_l$ with radius $\varepsilon$, i.e., $\|\mathbf{c}_l-\hat{\mathbf{c}}_l\|_2^2 \le \varepsilon^2$. 
To maximize the sensing performance of target area, 
we formulate a joint optimization problem based on the minimax criterion
\begin{subequations}\label{eq16}
\begin{align}
  \left(\text{P0}\right) \quad & \underset{\mathbf{z}}{\min} \quad \underset{\{\mathbf{t}_u,\mathbf{v}_u\}_{u=1}^U}{\max} \quad \left\{\begin{array}{l}
      \alpha \psi_\text{p} \text{tr}\left(\mathbf{B}_\text{p}\left(\mathbf{v}_u,\mathbf{t}_u,\mathbf{z}\right)\right)  \\
       +(1-\alpha) \psi_\text{a} \text{tr}\left(\mathbf{B}_\text{a}\left(\mathbf{v}_u,\mathbf{t}_u,\mathbf{z}\right)\right)
  \end{array}\right\} \nonumber \\
  &\quad \text{s.t.}\quad\quad\quad \|\mathbf{c}_l-\hat{\mathbf{c}}_l\|_2^2 \le \varepsilon^2, \forall l, \label{eq16a}\\
   & \quad\quad\quad\quad \quad \sum_{l=1}^L N_\text{A}^l \le N_\text{A}^\text{max}, \label{eq16b}\\
   &\quad \quad N_\text{A}^l \in \mathbb{Z}^+, \forall l, 
    \quad N_\text{A}^l \in \left[1, N_\text{A}^\text{max}\right), \forall l.\label{eq16c}
\end{align}
\end{subequations}
where $\alpha\in\left(0,1\right)$ denotes the balancing factor; 
$\psi_\text{p}$ and $\psi_\text{a}$ are the weight factor used to unify the order of magnitude; 
(\ref{eq16a}) is the feasible scope constraint of AP's location; 
(\ref{eq16b}) is the total antenna resource constraint; 
(\ref{eq16c}) describes integer and non-negative constraints for antenna.

To transform problem \hyperref[eq16]{(P0)} into a standard form, 
we introduce an auxiliary variable $\varpi_u$
\begin{equation}\label{eq17}
  \varpi_u = \left\{\begin{array}{l}
      \alpha \psi_\text{p} \text{tr}\left(\mathbf{B}_\text{p}\left(\mathbf{v}_u,\mathbf{t}_u,\mathbf{z}\right)\right)  \\
       +(1-\alpha) \psi_\text{a} \text{tr}\left(\mathbf{B}_\text{a}\left(\mathbf{v}_u,\mathbf{t}_u,\mathbf{z}\right)\right)
  \end{array}\right\},  \forall l 
\end{equation}
Then, we redefine the optimization variables $\mathbf{c}_l = \mathbf{C}_l^\text{t}\mathbf{z}$ and $N_\text{A}^l=\mathbf{c}_l^\text{A}\mathbf{z}$, 
and the auxiliary variables $\mathbf{a}=\mathbf{C}_l^\text{t}\mathbf{z}-\hat{\mathbf{c}}_l$ and $b_l = \mathbf{c}_l^\text{A}\mathbf{z}$, 
where $\mathbf{C}_l^\text{t}$ and $\mathbf{c}_l^\text{A}$ are selection matrix and vector, respectively. 
Therefore, the problem \hyperref[eq16]{P0} is transformed into
\begin{subequations}\label{eq18}
\begin{align}
   &\left(\text{P1}\right) \quad \underset{\mho,\mathbf{z},\{\varpi_u\}_{u=1}^U,\{\mathbf{a}_l,b_l\}_{l=1}^L}{\min} \quad \mho   \nonumber \\
   & \text{s.t.} \quad \quad \mathbf{a}=\mathbf{C}_l^\text{t}\mathbf{z}-\hat{\mathbf{c}}_l, \forall l, \\
   &\quad \quad \quad \  b_l = \mathbf{c}_l^\text{A}\mathbf{z},\forall l, \\
   &\quad \quad \quad \quad \quad (\ref{eq17}) \\
   &\quad \quad \quad \ \|\mathbf{a}_l\|_2^2 \le \varepsilon^2, \forall l, \\
   & \sum_{l=1}^{L}b_l \le N_\text{A}^\text{max}, \quad b_l \in \left[1, N_\text{A}^\text{max}\right), \forall l \label{eq18e}\\
   &\quad \quad \quad \ \varpi_u \le \mho, \forall u,\label{eq18f}
\end{align}
\end{subequations}
where the NP-hard constraint (\ref{eq16c}) is relaxed as a continuous variable. 
Problem \hyperref[eq18]{(P1)} is nonlinear and non-convex, 
resulting in slow convergence and poor stability with conventional methods such as gradient descent and Newton’s method. 
To address this problem, we employ the ADMM framework to decompose the problem into subproblems, 
combining analytical and truncated Newton methods for efficient large-scale optimization.

\subsection{Optimization Problem Solving}\label{se3B}
First, we form an augmented Lagrangian function derived from the optimization problem \hyperref[eq18]{(P1)}
\begin{equation}\label{eq19}
\begin{aligned}
   &L\left(\mho, \mathbf{z}, \{\varpi_u, \gamma_u\}_{u=1}^U, \{\mathbf{a}_l, b_l, \boldsymbol{\lambda}_l, \chi_l\}_{l=1}^L\right)\\
   &= \mho + \mathbf{d}\left(\mathbf{z}, \{\boldsymbol{\lambda}_l,\mathbf{a}_l\}_{l=1}^L\right) + \mathbf{e}\left(\mathbf{z}, \{\chi_l, b_l\}_{l=1}^L\right) \\ &\quad + \mathbf{f}\left(\mathbf{z}, \{\gamma_u,\varpi_u\}_{u=1}^U\right),
\end{aligned}
\end{equation}
where $\mathbf{d}$, $\mathbf{e}$, and $\mathbf{f}$ are expressed in (\ref{eq20}), (\ref{eq21}), and (\ref{eq22}), respectively; $\{\rho_1, \rho_2, \rho_3\} > 0$ are penalty parameters, 
updated using a classical adaptively adjustment scheme~\cite{boyd2011distributed}; $\{\boldsymbol{\lambda}_l, \chi_l\}_{l=1}^L, \{\gamma_u \}_{u=1}^U$ are Lagrangian multipliers.
\begin{equation}\label{eq20}
\begin{aligned}
  &\mathbf{d}\left(\mathbf{z}, \{\boldsymbol{\lambda}_l,\mathbf{a}_l\}_{l=1}^L\right) \\ & = \sum_l^L\left(\left \langle \boldsymbol{\lambda}_l, \left(\mathbf{a}_l-\mathbf{C}_l^\text{t}\mathbf{z}+\hat{\mathbf{c}}_l\right) \right \rangle + \frac{\rho_1}{2}\|\mathbf{a}_l-\mathbf{C}_l^\text{t}\mathbf{z}+\hat{\mathbf{c}}_l\|_2^2\right),    
\end{aligned}
\end{equation}
\begin{equation}\label{eq21}
{\fontsize{9}{9}\begin{aligned}
\mathbf{e}\left(\mathbf{z}, \{\chi_l, b_l\}_{l=1}^L\right) = \sum_l^L\left(\chi_l\left(b_l-\mathbf{c}_l^\text{A}\mathbf{z}\right)+\frac{\rho_2}{2}\left(b_l-\mathbf{c}_l^\text{A}\mathbf{z}\right)^2\right),
\end{aligned} } 
\end{equation}
\begin{figure*}[!ht]
\begin{equation}\label{eq22}
 \mathbf{f}\left(\mathbf{z}, \{\gamma_u,\varpi_u\}_{u=1}^U\right) = \sum_u^U\left\{\begin{array}{l}
   \gamma_u\left(\varpi_u-\alpha \psi_\text{p} \text{tr}\left(\mathbf{B}_\text{p}\left(\mathbf{v}_u,\mathbf{t}_u,\mathbf{z}\right)\right) - (1-\alpha) \psi_\text{a} \text{tr}\left(\mathbf{B}_\text{a}\left(\mathbf{v}_u,\mathbf{t}_u,\mathbf{z}\right)\right)\right)  \\
    + \frac{\rho_3}{2}\left(\varpi_u-\alpha \psi_\text{p} \text{tr}\left(\mathbf{B}_\text{p}\left(\mathbf{v}_u,\mathbf{t}_u,\mathbf{z}\right)\right) - (1-\alpha) \psi_\text{a} \text{tr}\left(\mathbf{B}_\text{a}\left(\mathbf{v}_u,\mathbf{t}_u,\mathbf{z}\right)\right)\right)^2
 \end{array}\right\} .  
\end{equation} 
{\noindent} \rule[-10pt]{18cm}{0.1em}
\end{figure*}
\hspace{0.3em} We decompose (\ref{eq19}) into the following four subproblems for iterative resolution, 
adjusting through Lagrange multipliers, where the $k$-th iteration steps are as follows.

\textbf{\textit{Step 1:}}
(Update $\{\mathbf{a}_l^k\}_{l=1}^L$, 
and fix other variables $\mho^k$, $\mathbf{z}^k$, $\{\varpi_u^k\}_{u=1}^U$, $\{b_l^k\}_{l=1}^L$) 
The $L$ variables are updated sequentially, 
with $\mathbf{a}_1^k$ updated while fixing $\{\mathbf{a}_l^k\}_{l=2}^L$, 
and similarly for the remaining variables. 
Therefore, the update problem for $\mathbf{a}_l^k$ is
\begin{subequations}\label{eq23}
\begin{align}
  \left(\text{P2.1}\right) \quad \mathbf{a}_l^{k+1} &= \arg\underset{\mathbf{a}_l^k}{\min} \left(\begin{array}{l}
    \left \langle \boldsymbol{\lambda}_l^k, \left(\mathbf{a}_l^k-\mathbf{C}_l^\text{t}\mathbf{z}^k  +\hat{\mathbf{c}}_l\right) \right \rangle \\ + \frac{\rho_1}{2}\|\mathbf{a}_l^k-\mathbf{C}_l^\text{t}\mathbf{z}^k+\hat{\mathbf{c}}_l\|_2^2  \end{array}\right) \nonumber \\ & \text{s.t.} \quad\quad\quad \|\mathbf{a}_l^k\|_2^2 \le \varepsilon^2.\label{eq23a}
\end{align}
\end{subequations}
Problem \hyperref[eq23]{(P2.1)} is convex with inequality constraints and 
can be solved using gradient computation and projection operations. 
The updated $\mathbf{a}_l^{k+1}$ is expressed as
\begin{equation}\label{eq24}
  \mathbf{a}_l^{k+1} = \text{Proj}_{\|\mathbf{a}_l^k\|_2^2 \le \varepsilon^2} \left(\mathbf{C}_l^\text{t}\mathbf{z}^k-\hat{\mathbf{c}}_l-\frac{\boldsymbol{\lambda}_l^k}{\rho_1}\right), 
\end{equation}
where $\text{Proj}_{\|\mathbf{a}_l^k\|_2^2 \le \varepsilon^2}$ denotes that 
the updated optimal variable is projected onto constraint (\ref{eq23a}). 
The detailed derivation of the analytic solution is given in \hyperref[apex2]{Appendix B}.

\textbf{\textit{Step 2:}}
(Update $\{b_l^k\}_{l=1}^L$, and fix other variables $\mho^k$, $\mathbf{z}^k$, $\{\varpi_u^k\}_{u=1}^U$, $\{\mathbf{a}_l^{k+1}\}_{l=1}^L$)
Constraint (\ref{eq18e}) couples the $U$ variables, preventing independent optimization. 
Therefore, the update problem is
\begin{subequations}\label{eq25}
    \begin{align}
      \left(\text{P2.2}\right)  &\underset{\{b_l^k\}_{l=1}^L}{\min} \sum_l^L\left(\chi_l^k\left(b_l^k-\mathbf{c}_l^\text{A}\mathbf{z}^k\right)+\frac{\rho_2}{2}\left(b_l^k-\mathbf{c}_l^\text{A}\mathbf{z}^k\right)^2\right)  \nonumber \\
     & \text{s.t.}  \quad\sum_{l=1}^{L}b_l^k \le N_\text{A}^\text{max}, \quad b_l^k \in \left[1, N_\text{A}^\text{max}\right), \forall l,
    \end{align}
\end{subequations}
which is solved via the CVX toolbox~\cite{grant2014cvx}.

\textit{Step 3:}\label{se3B3}
(Update $\mho^k, \{\varpi_u^k\}_{u=1}^U$, and fix other variables $\mathbf{z}^k$, $\{b_l^{k+1}$, $\mathbf{a}_l^{k+1}\}_{l=1}^L$)
The update problem is expressed as
\begin{subequations}\label{eq26}
    \begin{align}
        \left(\text{P2.3}\right) \quad & \underset{\mho^k,\{\varpi_u^k\}_{u=1}^U}{\min}  \quad \mho^k+ \mathbf{f}\left(\mathbf{z}^k, \{\gamma_u^k,\varpi_u^k\}_{u=1}^U\right) \nonumber \\
        & \quad \quad \text{s.t.} \quad\quad\quad \varpi_u^k \le \mho^k, \forall u.
    \end{align}
\end{subequations}
For problem \hyperref[eq26]{(P2.3)}, we intend to update $\mho^k$ and $\{\gamma_u^k,\varpi_u^k\}_{u=1}^U$ successively, and the specific update steps are as follows:
\begin{figure*}
    \begin{subequations}\label{eq27}
        \begin{align}
            \left(\text{P2.3.1}\right) \quad \varpi_u^{k+1} = \arg\underset{\varpi_u^k}{\min} & \left\{\begin{array}{l}
   \gamma_u^k\left(\varpi_u^k-\alpha \psi_\text{p} \text{tr}\left(\mathbf{B}_\text{p}\left(\mathbf{v}_u,\mathbf{t}_u,\mathbf{z}^k\right)\right) - (1-\alpha) \psi_\text{a} \text{tr}\left(\mathbf{B}_\text{a}\left(\mathbf{v}_u,\mathbf{t}_u,\mathbf{z}^k\right)\right)\right)  \\
    + \frac{\rho_3}{2}\left(\varpi_u^k-\alpha \psi_\text{p}\text{tr}\left(\mathbf{B}_\text{p}\left(\mathbf{v}_u,\mathbf{t}_u,\mathbf{z}^k\right)\right) - (1-\alpha) \psi_\text{a}\text{tr}\left(\mathbf{B}_\text{a}\left(\mathbf{v}_u,\mathbf{t}_u,\mathbf{z}^k\right)\right)\right)^2
 \end{array}\right\} \nonumber \\
    & \text{s.t.} \quad\quad\quad \varpi_u^k \le \mho^k, \forall u.
        \end{align}
    \end{subequations}
{\noindent} \rule[-10pt]{18cm}{0.1em}
\end{figure*}
\begin{itemize}
    \item Fix $\mho^k$, and update $\{\gamma_u^k,\varpi_u^k\}_{u=1}^U$. 
    The update problem of $\varpi_u^k$ is represented in \hyperref[eq27]{(P2.3.1)} shown at the top of this page, 
    which has a similar solution to \hyperref[eq23]{(P2.1)}, and the close form of $\{\varpi_u^{k+1}\}_{u=1}^U$ is 
\end{itemize}
    \begin{equation}\label{eq28}
  {\fontsize{9}{8}  \begin{aligned}
       \varpi_u^{k+1} =  \min\left(\mho^k,\left\{\begin{array}{l}
          \alpha \psi_\text{p} \text{tr}\left(\mathbf{B}_\text{p}\left(\mathbf{v}_u,\mathbf{t}_u,\mathbf{z}^k\right)\right) + \\
           (1-\alpha) \psi_\text{a} \text{tr}\left(\mathbf{B}_\text{a}\left(\mathbf{v}_u,\mathbf{t}_u,\mathbf{z}^k\right)\right) - \frac{\gamma_u^k}{\rho_3} 
       \end{array}\right\}\right) .
    \end{aligned}  }  
    \end{equation}
\begin{itemize}
    \item Fix $\{\varpi_u^{k+1}\}_{u=1}^U$, and update $\mho^k$. Based on the definition of $\mho^k$, the updated variable is $\mho^{k+1} = \underset{u}{\max}\left(\varpi_u^{k+1}\right)$.
\end{itemize}


\textbf{\textit{Step 4:}}
(Update $\mathbf{z}^k$, and fix other variables $\mho^{k+1}
$, $\{\mathbf{a}_l^{k+1},b_l^{k+1}\}_{l=1}^L$, $\{\varpi_u^{k+1}\}_{u=1}^U$) The update problem of $\mathbf{z}^k$ can be expressed in (\ref{eq29}) shown at the top of this page.
\begin{figure*}
    \begin{align}\label{eq29}
     \left(\text{P2.4}\right)  \quad \mathbf{z}^{k+1} & = \arg\underset{\mathbf{z}^k}{\min} \quad f\left(\mathbf{z}^k\right) \nonumber \\
        & = \arg\underset{\mathbf{z}^k}{\min} \quad \left\{ \begin{array}{c}
          \sum_l^L \frac{\rho_1}{2}\|\mathbf{a}_l^{k+1}-\mathbf{C}_l^\text{t}\mathbf{z}^k+\hat{\mathbf{c}}_l+\frac{\boldsymbol{\lambda}_l^k}{\rho_1}\|_2^2 + \sum_l^L\frac{\rho_2}{2}\left(b_l^{k+1}-\mathbf{c}_l^\text{A}\mathbf{z}^k+\frac{\chi_l^k}{\rho_2}\right)^2     \\
          + \sum_u^U\frac{\rho_3}{2}\left(\varpi_u^{k+1}-\alpha\psi_\text{p}\text{tr}\left(\mathbf{B}_\text{p}\left(\mathbf{v}_u,\mathbf{t}_u,\mathbf{z}^k\right)\right)-(1-\alpha)\psi_\text{a}\text{tr}\left(\mathbf{B}_\text{a}\left(\mathbf{v}_u,\mathbf{t}_u,\mathbf{z}^k\right)\right)+\frac{\gamma_u^k}{\rho_3}\right)^2            
        \end{array} \right\}.
    \end{align}
{\noindent} \rule[-10pt]{18cm}{0.1em}
\end{figure*}
Problem \hyperref[eq29]{(P2.4)} is an unconstrained, nonlinear, nonconvex problem with a high-dimensional variable $\mathbf{z}$, solved using the truncated Newton method~\cite{nocedal1999numerical}, ideal for large-scale optimization.

\textbf{\textit{Step 5:}}
Update variables $\{\boldsymbol{\lambda}_l^k, \chi_l^k\}_{l=1}^L, \{\gamma_u^k\}_{u=1}^U$
\begin{equation}\label{eq30}
 {\fontsize{8.5}{9} \begin{array}{c}
      \boldsymbol{\lambda}_l^{k+1} = \boldsymbol{\lambda}_l^{k} + \rho_1 \left(\mathbf{a}_l^{k+1}-\mathbf{C}_l^\text{t}\mathbf{z}^{k+1}+\hat{\mathbf{c}}_l\right), \forall l \\
      \chi_l^{k+1} = \chi_l^{k} + \rho_2\left(b_l^{k+1}-\mathbf{c}_l^\text{A}\mathbf{z}^{k+1}\right), \forall l \\
      \gamma_u^{k+1} = \gamma_u^{k} + \rho_3 \left(\begin{array}{l}
          \varpi_u^{k+1}-\alpha\psi_\text{p} \text{tr}\left(\mathbf{B}_\text{p}\left(\mathbf{v}_u,\mathbf{t}_u,\mathbf{z}^{k+1}\right)\right)  \\
           (1-\alpha) \psi_\text{a} \text{tr}\left(\mathbf{B}_\text{a}\left(\mathbf{v}_u,\mathbf{t}_u,\mathbf{z}^{k+1}\right)\right) 
       \end{array}\right).\forall u
  \end{array} }
\end{equation}

The four subproblems and the Lagrange multiplier updates are iterated until convergence. Primal residuals are defined as $\zeta_\text{p}^k= \underset{l}{\max}\frac{\|\mathbf{a}_l^k-\mathbf{C}_l^\text{t}\mathbf{z}^k+\hat{\mathbf{c}_l}\|_2}{\|\mathbf{a}_l^k\|_2}$, $\phi_\text{p}^k = \underset{l}{\max}\frac{|b_l^k-\mathbf{c}_l^k\mathbf{z}^k|}{|b_l^k|}$, and $\upsilon_\text{p}^k = \underset{u}{\max}\frac{|\varpi_u^k-\alpha\psi_\text{p}\text{tr}\left(\mathbf{B}_\text{p}\right)-(1-\alpha)\psi_\text{a}\text{tr}\left(\mathbf{B}_\text{a}\right)|}{|\varpi_u^k|}$. The dual residuals are $\zeta_\text{d}^k =\underset{l}{\max}\frac{\|\mathbf{a}_l^k-\mathbf{a}_l^{k-1}\|_2}{\|\mathbf{a}_l^k\|_2}$, $\phi_\text{d}^k = \underset{l}{\max}\frac{|b_l^k-b_l^{k-1}|}{|b_l^k|}$, and $\upsilon_\text{d}^k = \underset{l}{\max}\frac{|\varpi_u^k-\varpi_u^{k-1}|}{|\varpi_u^k|}$. The joint deployment and antenna resource optimization scheme for distributed APs is shown in \hyperref[alg2]{Algorithm 2}.
The optimized location and antenna number of the $l$-th AP are $\tilde{\mathbf{c}}_l = \mathbf{C}_l^\text{t}\mathbf{z}^*$ and $\tilde{N}_\text{A}^l = \mathbf{c}_l^\text{A}\mathbf{z}^*$, respectively.

\begin{table}[!ht]
\centering
\label{alg2}
\resizebox{0.9\linewidth}{!}{
\setlength{\arrayrulewidth}{1.5pt}
\begin{tabular}{rllll}
\hline
\multicolumn{5}{l}{\textbf{Algorithm 2:} Proposed Joint Optimization Scheme}   \\ \hline
\multirow{-3}{*}{\textbf{Input:} }               & \multicolumn{4}{l}{\begin{tabular}[c]{@{}l@{}}$\{\mathbf{t}_u, \mathbf{v}_u\}_{u=1}^U$, $\{\mathbf{C}_l^\text{t}, \mathbf{c}_l^\text{A}\}_{l=1}^L$, $N_\text{A}^\text{max}$, and $\{\hat{\mathbf{c}}_l\}_{l=1}^L$;\\ Maximum iterative time $K_\text{max}$ and radius $\varepsilon$;\\ Feasibility tolerances $\hat{\varepsilon}_\zeta$, $\hat{\varepsilon}_\phi$, and $\hat{\varepsilon}_\upsilon$. \end{tabular}} \\
\textbf{Output:}               & \multicolumn{4}{l}{The optimized variable $\mathbf{z}^{*}$.} \\ 
1:       & \multicolumn{4}{l}{Initialize $\mathbf{z}^0, \{\boldsymbol{\lambda}_l^0,\chi_l^0\}_{l=1}^L, \{\gamma_u^0\}_{u=1}^U, \rho_1^0,\rho_2^0,\rho_3^0$;} \\
2:        & \multicolumn{4}{l}{\textbf{for} $k = 0,1,\cdots,K_\text{max}$ \textbf{do}}  \\
3:     & \multicolumn{4}{l}{$\hspace{0.8em}$ Obtain $\{\mathbf{a}_l^{k+1}\}_{l=1}^L$ by solving problem \hyperref[eq23]{(P2.1)};}  \\
4:     & \multicolumn{4}{l}{$\hspace{0.8em}$ Obtain $\{b_l^{k+1}\}_{l=1}^L$ by solving problem \hyperref[eq25]{(P2.2)};}  \\
5:  & \multicolumn{4}{l}{$\hspace{0.8em}$ Obtain $\{\varpi_u^{k+1}\}_{l=1}^L$ and $\mho^{k+1}$ by \hyperref[se3B3]{\textit{Step 3}};}  \\
6:                              & \multicolumn{4}{l}{$\hspace{0.8em}$ Obtain $\mathbf{z}^{k+1}$ by using truncated Newton method;}  \\
7:                           & \multicolumn{4}{l}{$\hspace{0.8em}$ Update $\{\boldsymbol{\lambda}_l^{k+1},\chi_l^{k+1}\}_{l=1}^L, \{\gamma_u^{k+1}\}_{u=1}^U$ by (\ref{eq30});}   \\
8:                              & \multicolumn{4}{l}{$\hspace{0.8em}$ \textbf{if} the convergence conditions are satisfied \textbf{do}}  \\
9:                            & \multicolumn{4}{l}{$\hspace{1.6em}$ return $\mathbf{z}^{*} = \mathbf{z}^k$;}  \\
10:                              & \multicolumn{4}{l}{$\hspace{0.8em}$ \textbf{else} Update $\rho_1,\rho_2,\rho_3$ by the scheme in~\cite{boyd2011distributed} ;}  \\
11:                              & \multicolumn{4}{l}{\textbf{end for} }  \\
\hline
\end{tabular}}
\end{table}

Then, we introduce a cooperative sensing signal processing scheme to achieve the effective fusion of multiple sensing information and precise sensing of multiple targets, 
as detailed in Section~\ref{se4}. 

\section{A Cooperative Sensing Signal Processing Scheme}\label{se4}
As shown in Fig. \ref{fig2}, an SL-MDTS scheme is proposed to address the challenge of 
sensing information fusion and achieve high-accuracy sensing of multiple targets, 
which comprises two stages: signal preprocessing and symbol-level fusion-based sensing.
\begin{figure*}
    \centering \includegraphics[width=0.95\textwidth]{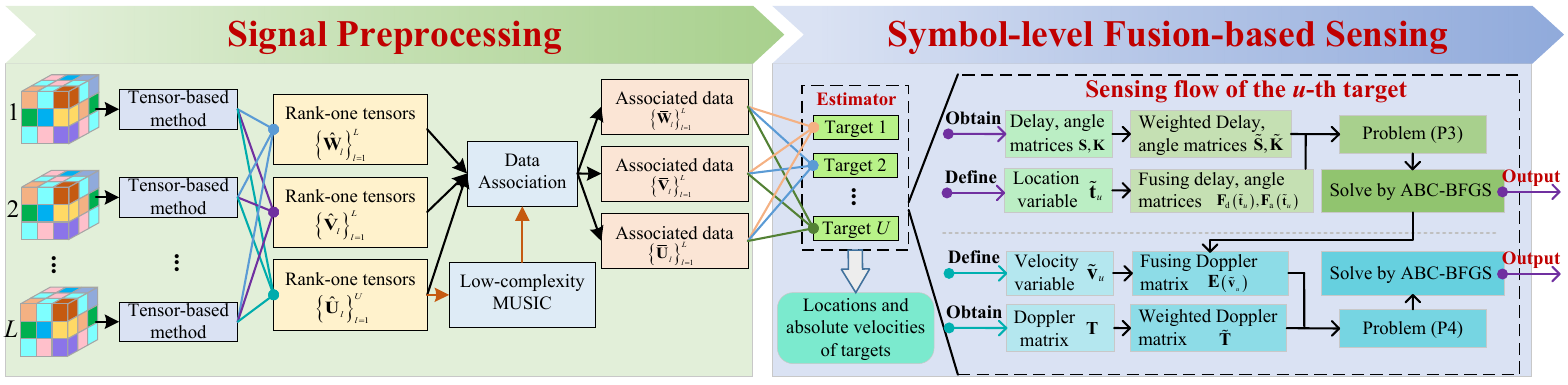}
    \caption{The diagram of the proposed SL-MDTS scheme}
    \label{fig2}
\end{figure*}

\subsection{Signal Preprocessing Stage}\label{sec3-a}
From (\ref{eq1}), the sensing information of $U$ dynamic targets is coupled at each AP, 
requiring decoupling for multi-target sensing. 
Additionally, each AP lacks knowledge of which target corresponds to the received sensing information, 
necessitating data association to establish a one-to-one mapping between information and targets.
Therefore, the signal preprocessing stage is further subdivided into the following two steps.
\begin{enumerate}
    \item \textbf{\textit{Decoupling for multiple targets:}} The decoupling of the sensing information for $U$ dynamic targets at each AP is achieved by exploiting structures of high-dimensional echo signals and performing tensor decomposition~\cite{kolda2009tensor}.
    \item \textbf{\textit{Data association:}} The one-to-one relationship is established using an exhaustive search method for each AP, transforming the multi-target sensing problem into $U$ individual single-target sensing problems.
\end{enumerate}

\subsubsection{Decoupling for multiple targets}
According to (\ref{eq1}), for the $l$-th AP, 
the third-order tensor form of the ISAC echo signal of the $l$-th AP canceling of 
known pilot data on $N_\text{c}$ subcarriers during $M$ OFDM symbol periods 
is~\cite{kolda2009tensor}
\begin{equation}\label{eq31}
     \mathcal{Y}_l\approx \left[\!\left[ \mathbf{U}_l,\mathbf{V}_l,\mathbf{W}_l \right]\!\right] + \mathcal{Z}\equiv \sum_{u=1}^{U}\mathbf{a}_{\text{a},l}^u\circ  \mathbf{a}_{\text{d},l}^u\circ   \mathbf{a}_{\text{f},l}^u+ \mathcal{Z}_l,
\end{equation}
where $\mathcal{Y}_l \in \mathbb{C}^{\tilde{N}_\text{A}^l\times N_\text{c}\times M}$, 
and $\mathcal{Z}_l$ is an AWGN tensor; $\mathbf{U}_l=\left[\mathbf{a}_{\text{a},l}^1,\mathbf{a}_{\text{a},l}^2,\cdots, \mathbf{a}_{\text{a},l}^U\right]$, $\mathbf{V}_l=\left[\mathbf{a}_{\text{d},l}^1,\mathbf{a}_{\text{d},l}^2,\cdots, \mathbf{a}_{\text{d},l}^U\right]$, and $\mathbf{W}_l=\left[\mathbf{a}_{\text{f},l}
^1,\mathbf{a}_{\text{f},l}
^2,\cdots, \mathbf{a}_{\text{f},l}
^U\right]$; $\mathbf{a}_{\text{a},l}^u$, $\mathbf{a}_{\text{d},l}^u$, and $\mathbf{a}_{\text{f},l}^u$ are the rank-one tensors of the $u$-th target, expressed in (\ref{eq32}), (\ref{eq33}), and (\ref{eq34}), respectively.
\begin{equation}\label{eq32}
\mathbf{a}_{\text{a},l}^u=\sqrt{P_\mathrm{t}}\mathbf{a}_\text{r}\left(\theta_l^u\right)\chi_\text{t}^u, 
\end{equation}
\begin{equation}\label{eq33}
\mathbf{a}_{\text{d},l}^u=\left[e^{-j2\pi\Delta f\tau_l^u}, e^{-j4\pi\Delta f\tau_l^u}, \cdots, e^{-j2\pi N_\text{c}\Delta f\tau_l^u}\right]^\text{T}, 
\end{equation}
\begin{equation}\label{eq34}
 \mathbf{a}_{\text{f},l}^u= \left[e^{j2\pi f_{\text{D},l}^uT}, e^{j4\pi f_{\text{D},l}^uT}, \cdots, e^{j2\pi M f_{\text{D},l}^uT}\right]^\text{T}.  
\end{equation}

The \cite{liu2024multipath} prove that the factor matrices of $\mathcal{Y}_l$ satisfy Kruskal's condition, guaranteeing unique CANDECOMP/PARAFAC (CP) decomposition. 
Based on (\ref{eq31}), the CANDECOMP/PARAFAC (CP) decomposition problem is expressed in~\cite{liu2024multipath}
\begin{equation}\label{eq35}
    \min_{\mathbf{\hat{U}}_l, \mathbf{\hat{V}}_l, \mathbf{\hat{W}}_l}\left\|\mathcal{Y}_l-\left[\!\left[ \mathbf{\hat{U}}_l, \mathbf{\hat{V}}_l, \Hat{\mathbf{W}}_l \right]\!\right]\right\|_{\text{F}},
\end{equation}
which is solved by the regularized alternating least-squares method~\cite{kolda2009tensor}.

Repeating (\ref{eq31})-(\ref{eq35}) for each AP, 
the decoupling of multi-dynamic target sensing information is achieved. 
For the $l$-th AP, decomposition yields $\hat{\mathbf{U}}_l$, $\hat{\mathbf{V}}_l$, and $\hat{\mathbf{W}}_l$.
It should be noted that each AP lacks knowledge of which target corresponds to the received sensing information.
To avoid ambiguity, the index of the decomposed rank-one tensor is renumbered, and the decomposition products for the $l$-th AP are expressed as
\begin{equation}\label{eq36}
\begin{aligned}
\hat{\mathbf{U}}_l=\left[\hat{\mathbf{a}}_{\text{a},l}^\xi\right]|_{\xi=1:U},
\hat{\mathbf{V}}_l=\left[\hat{\mathbf{a}}_{\text{d},l}^\xi\right]|_{\xi=1:U},
\hat{\mathbf{W}}_l=\left[\hat{\mathbf{a}}_{\text{f},l}^\xi\right]|_{\xi=1:U},
\end{aligned}
\end{equation}
which $\xi$ denotes the relative index, serving as a convenient reference for identification, while $u$ represents the absolute index, corresponding to the globally unified target identity.

\subsubsection{Data association}\label{sce4-A2}
Eq.~(\ref{eq36}) indicates that it is unclear which target corresponds to the three rank-one tensors $\hat{\mathbf{a}}_{\text{a},l}^\xi$, $\hat{\mathbf{a}}_{\text{d},l}^\xi$, and $\hat{\mathbf{a}}_{\text{f},l}^\xi$.
Therefore, a data association operation is applied to establish the relationship between the three rank-one tensors and the targets. The detailed steps are as follows.

\textit{\textbf{Step 1:}}
Process $\{\hat{\mathbf{a}}_{\text{d},l}^\xi\}_{\xi=1}^U$ to estimate the distances between the $U$ targets and the $l$-th AP. 
To minimize the impact of estimation errors on data association while maintaining low complexity, 
we propose a variant of the low-complexity MUSIC method~\cite{wei2024integrated}, 
as described below.

To estimate the distance $R_l^\xi$ between the $\xi$-th target and the $l$-th AP, we first perform an inverse discrete Fourier transform (IDFT) on $\left[(\hat{\mathbf{a}}_{\text{d},l}^\xi)^\text{T}, 0_{1 \times (N_\text{ifft} - N_\text{c})  }\right]$, where $N_\text{ifft}$ denotes the sampling number of IDFT. 
The rough estimate is $\tilde{R}_l^\xi = \frac{\hat{n}c}{2\Delta f N_\text{ifft}}$, 
where $\hat{n}$ is the peak index in the IDFT output. 
Thus, the MUSIC method’s search interval is denoted by
\begin{equation}\label{eq37}
    R_\text{search} = \left[\frac{(\hat{n}-1)c}{2\Delta f N_\text{ifft}},\quad \frac{(\hat{n}+1) c}{2\Delta f N_\text{ifft}}\right].
\end{equation}
Using $R_\text{search}$, we apply the MUSIC method to obtain a high-accuracy estimation. 

Computing the covariance matrix $\mathbf{R}_\text{cm} = \hat{\mathbf{a}}_{\text{d},l}^\xi(\hat{\mathbf{a}}_{\text{d},l}^\xi)^\text{H}$ and decomposing it via eigenvalue decomposition (EVD), we obtain
\begin{equation}\label{eq38}
\text{eig}\left(\mathbf{R}_\text{cm}\right) = \left[\mathbf{U}_\text{em}, \Sigma_\text{ev}\right],  
\end{equation}
where $\Sigma_\text{ev}$ is the eigenvalue diagonal matrix, and $\mathbf{U}_\text{em}$ is the corresponding eigenvector matrix. The noise subspace $\mathbf{U}_\text{n}=\mathbf{U}_\text{em}\left(:, 2:N_\text{c}\right)$ is used to construct 
\begin{equation}\label{eq39}
    \mathbf{s}(\hat{r}) = \frac{1}{\mathbf{a}(\hat{r})^\text{H}\mathbf{U}_\text{n}\mathbf{U}_\text{n}^\text{H}\mathbf{a}(\hat{r})},
\end{equation}
where $\hat{r} \in R_\text{search}$, $\mathbf{a}(\hat{r})=\left[e^{-j2\pi n \Delta f \frac{2\hat{r}}{c}}\right]^\text{T}|_{n = 1:N_\text{c}}$, and 
the peak index of $\mathbf{s}$ is the final distance estimate $\hat{R}_l^\xi$.

Repeating (\ref{eq37}) - (\ref{eq39}) for $\{\hat{\mathbf{a}}_{\text{d},l}^\xi\}_{\xi=1}^U$, the distance estimates for the $l$-th AP are denoted as $\{\hat{R}_l^\xi\}_{\xi=1}^U$.

\textbf{\textit{Step 2:}} $\{\hat{R}_l^\xi\}_{\xi=1}^U$ is reshaped as a vector $\mathbf{r}_l=\left[R_l^\xi\right]|_{\xi=1:U} = \mathbf{d}_l\mathbf{Q}_l$, where $\mathbf{d}_l=\left[\hat{r}_l^u\right]|_{u=1:U}$ is the range vector associating the distance information with the $U$ targets, and $\mathbf{Q}_l\in\mathbb{C}^{U\times U}$ is a permutation matrix obtained via the traditional exhaustive search method~\cite{Liu_sensing}. 

\textbf{\textit{Step 3:}} Use the obtained $\mathbf{Q}_l$, the decomposition products after data association are given by \eqref{eq40}.
\begin{figure*}
\begin{equation}\label{eq40}
\tilde{\mathbf{U}}_l=\hat{\mathbf{U}}_l\mathbf{Q}_l = \left[\hat{\mathbf{a}}_{\text{a},l}^u\right]|_{u=1:U}, 
\quad\quad
\tilde{\mathbf{V}}_l=\hat{\mathbf{V}}_l\mathbf{Q}_l = \left[\hat{\mathbf{a}}_{\text{d},l}^u\right]|_{u=1:U},
\quad\quad
\tilde{\mathbf{W}}_l=\hat{\mathbf{W}}_l\mathbf{Q}_l = \left[\hat{\mathbf{a}}_{\text{f},l}^u\right]|_{u=1:U}.
\end{equation}  
{\noindent} \rule[-10pt]{18cm}{0.1em}
\end{figure*}

Repeating \textit{\textbf{Step 1-3}} to each AP's decomposition products yields $\{\tilde{\mathbf{U}}_l,\tilde{\mathbf{V}}_l,\tilde{\mathbf{W}}_l\}_{l=1}^L$ for the second stage, transforming the multi-target sensing problem into $U$ single-target sensing problems. Without loss of generality, the subsequent stage focuses on the $u$-th target's sensing process.

\subsection{Symbol-Level Fusion-based Sensing Stage}\label{sce4-B}
We propose an SFO method for localization and velocity estimation, formulating the sensing problem as an optimization to achieve fusion gains and continuous estimation.

\subsubsection{Localization}
$\{\hat{\mathbf{a}}_{\text{d},l}^u\}_{l=1}^L$ and $\{\hat{\mathbf{a}}_{\text{a},l}^u\}_{l=1}^L$ are used for localization. 
For $\{\hat{\mathbf{a}}_{\text{d},l}^u\}_{l=1}^L$, the $l$-th element can be expressed as 
$\hat{\mathbf{a}}_{\text{d},l}^u = \mathbf{a}_{\text{d},l}^u + \omega_l^u$ 
with $\omega_l^u$ being a noise term. 
We construct a delay matrix for symbol-level fusion, which is expressed as 
\begin{equation}\label{eq41}
    \mathbf{S} = \left[\hat{\mathbf{a}}_{\text{d},1}^u,\hat{\mathbf{a}}_{\text{d},2}^u,\cdots,\hat{\mathbf{a}}_{\text{d},L}^u\right]^\text{T}\in\mathbb{C}^{L\times N_\text{c}}.
\end{equation}
The noise variance of each $\hat{\mathbf{a}}_{\text{d},l}^u$, used to determine the weight factor, is estimated as $\hat{\sigma}_l^2$ via EVD method~\cite{liu2024noise}. Based on maximal ratio combining principle, the weighted delay matrix is $\tilde{\mathbf{S}} = \mathbf{S} \odot \left(\mathbf{w}_\text{f} \otimes \mathbf{i}_{1\times N_\text{c}}\right) $, where $\mathbf{i}_{1\times N_\text{c}}$ is a vector of all ones, $\mathbf{w}_\text{f}$ is a weight vector, expressed as
\begin{equation}\label{eq42}
   \mathbf{w}_\text{f} = \left[\frac{\frac{1}{\hat{\sigma}_1^2}}{\sum_l^L\frac{1}{\hat{\sigma}_l^2}},\frac{\frac{1}{\hat{\sigma}_2^2}}{\sum_l^L\frac{1}{\hat{\sigma}_l^2}},\cdots,\frac{\frac{1}{\hat{\sigma}_L^2}}{\sum_l^L\frac{1}{\hat{\sigma}_l^2}}\right]^\text{T} \in \mathbb{R}^{L\times 1}.
\end{equation}

For $\{\hat{\mathbf{a}}_{\text{a},l}^u\}_{l=1}^L$, we pad the dimensions of the $L$ rank-one tensors with zeros to ensure consistency under nonuniform antenna allocation. The $l$-th padded element can be expressed as 
$\hat{\mathbf{a}}_{\text{s},l}^u =\left[\left[\mathbf{a}_{\text{a},l}^u +\omega_l^u\right]^\text{T},0_{\Psi-N_\text{A}^l}\right]^\text{T}$ with     
$\Psi = \underset{l}{\max}\left(N_\text{A}^l\right)$. 
We construct a angle matrix for symbol-level fusion, which is
\begin{equation}\label{eq43}
   \mathbf{K} = \left[\hat{\mathbf{a}}_{\text{s},1}^u,\hat{\mathbf{a}}_{\text{s},2}^u,\cdots,\hat{\mathbf{a}}_{\text{s},L}^u\right]^\text{T}\in\mathbb{C}^{L\times \Psi}.
\end{equation}
Since tensor decomposition has the same effect on three-rank tensors, $\mathbf{w}_\text{f}$ can be shared for $\mathbf{K}$, and the weighted angle matrix is $\tilde{\mathbf{K}}= \mathbf{K}\odot \left(\mathbf{w}_\text{f} \otimes \mathbf{i}_{1\times \Psi}\right)$.

To transform the localization problem into an optimization problem, we first define an optimization variable $\tilde{\mathbf{t}}_u = \left[\tilde{x}_u^\text{tr},\tilde{y}_u^\text{tr}\right]^\text{T}$. 
Then, the fusing delay vector for the $l$-th AP is 
\begin{equation}\label{eq44}
  \mathbf{f}_l^\text{d} = \left[e^{j2\pi n\Delta f\frac{2\|\tilde{\mathbf{t}}_u-\tilde{\mathbf{c}}_l\|_2}{c}}\right]^\text{T}|_{n=1:N_\mathrm{c}},  
\end{equation}
and the fusing delay matrix is 
\begin{equation}\label{eq45}
  \mathbf{F}_\text{d}(\tilde{\mathbf{t}}_u) = \left[\mathbf{f}_1^\text{d},\mathbf{f}_2^\text{d},\cdots,\mathbf{f}_L^\text{d}\right]\in\mathbb{C}^{N_\text{c}\times L}.
\end{equation}
The fusing angle vector for the $l$-th AP is 
\begin{equation}\label{eq46}
   \mathbf{f}_l^\text{a} = \left[e^{-j2\pi b\frac{d_\text{r}}{\lambda}\sin(\bar{\theta}_l^u)}\right]^\text{T}|_{b=1:\Psi},
\end{equation}
and the fusing angle matrix is 
\begin{equation}\label{eq47}
  \mathbf{F}_\text{a}(\tilde{\mathbf{t}}_u) = \left[\mathbf{f}_1^\text{a},\mathbf{f}_2^\text{a},\cdots,\mathbf{f}_L^\text{a}\right]\in\mathbb{C}^{\Psi\times L},  
\end{equation}
where $\bar{\theta}_l^u = \arctan(\frac{\tilde{\mathbf{t}}_u(2)-\tilde{\mathbf{c}}_l(2)}{\tilde{\mathbf{t}}_u(1)-\tilde{\mathbf{c}}_l(1)})$.

According to (\ref{eq41}) - (\ref{eq47}), the optimization problem for the optimization variable $\tilde{\mathbf{t}}_u$ can be formulated as
\begin{subequations}
\begin{align}\label{eq48}
    \text{(P3)}  \quad &\underset{\tilde{\mathbf{t}}_u}{\min} \quad \frac{1}{\text{tr}\left(|\tilde{\mathbf{S}}\mathbf{F}_\text{d}(\tilde{\mathbf{t}}_u)|+|\tilde{\mathbf{K}}\mathbf{F}_\text{a}(\tilde{\mathbf{t}}_u)|\right)} \nonumber \\
    & \text{s.t.} \quad  \text{tr}\left(|\tilde{\mathbf{S}}\mathbf{F}_\text{d}(\tilde{\mathbf{t}}_u)|+|\tilde{\mathbf{K}}\mathbf{F}_\text{a}(\tilde{\mathbf{t}}_u)|\right) > 0.
\end{align}
\end{subequations}
The surface in Fig. \ref{fig3} illustrates that problem \hyperref[eq48]{(P3)} is non-convex, which can be solved via particle swarm optimization (PSO)~\cite{wang2018particle}, 
differential evolution (DE)~\cite{wang2018particle}, simulated annealing (SA)~\cite{wang2018particle}, genetic algorithm (GA)~\cite{wang2018particle}, and artificial bee colony (ABC)~\cite{karaboga2008performance}. 
Among these, ABC is efficient for medium-scale continuous multi-modal problems such as \hyperref[eq48]{(P3)}~\cite{karaboga2008performance}, 
as demonstrated in Section \ref{se5}. 
For \hyperref[eq48]{(P3)}, we suggest the parameter selection of ABC to obtain best performance (detailed in Section \ref{se5}). 
However, due to ABC's limited local optimization accuracy, 
we further employ a hybrid algorithm, referred to as ABC-broyden-fletcher-goldfarb-shanno (BFGS), where ABC obtains a global approximate solution, 
and BFGS refines it to enhance accuracy. 
The simulation results confirm the superiority of the ABC-BFGS over the ABC.
The approximate solution to the problem \hyperref[eq48]{(P3)} is the estimated location of the $u$-th target  $\tilde{\mathbf{t}}_u^* =\left[\hat{x}_u^\text{tr},\hat{y}_u^\text{tr}\right]^\text{T}$. 

\subsubsection{Absolute velocity estimation}
$\{\hat{\mathbf{a}}_{\text{f},l}^u\}_{l=1}^L$ are used to estimate the absolute velocity of the target. 
The $l$-th element of $\{\hat{\mathbf{a}}_{\text{f},l}^u\}_{l=1}^L$ can be expressed as 
$\hat{\mathbf{a}}_{\text{f},l}^u =\mathbf{a}_{\text{f},l}^u + \omega_l^u$.   
We construct a Doppler matrix for symbol-level fusion, which is
\begin{equation}\label{eq49}
    \mathbf{T} =\left[\hat{\mathbf{a}}_{\text{f},1}^u ,\hat{\mathbf{a}}_{\text{f},2}^u ,\cdots,\hat{\mathbf{a}}_{\text{f},L}^u \right]^\text{T}\in\mathbb{C}^{L\times M}, 
\end{equation}
and the weighted Doppler matrix is $\tilde{\mathbf{T}} = \mathbf{T} \odot \left(\mathbf{w}_\text{f} \otimes \mathbf{i}_{1\times M}\right)$.

Then, we define an optimization variable $\tilde{\mathbf{v}}_u = \left[\tilde{v}_u,\tilde{\theta}_u\right]^\text{T}$. 
The construct of fusing Doppler vector needs the AoAs between $L$ APs and target $u$. 
Due to the difference in accuracy of angle estimation caused by antenna allocation, 
we use the target locations obtained earlier to calculate AoAs. 
The fusing Doppler vector for the $l$-th AP is 
\begin{equation}\label{eq50}
\begin{aligned}
   \mathbf{e}_l^\text{f} = \left[e^{j2\pi m \frac{2f_\text{c}\tilde{\mathbf{v}}(1)\cos(\hat{\theta}_l^u-\tilde{\mathbf{v}}(2))}{c}T}\right]|_{m=1:M},    
\end{aligned}
\end{equation}
where $\hat{\theta}_l^u =\arctan\left(\frac{\hat{y}_u^\text{tr}-\tilde{\mathbf{c}}_l(2)}{\hat{x}_u^\text{tr}-\tilde{\mathbf{c}}_l(1)}\right)$.
The fusing Doppler matrix is 
\begin{equation}\label{eq51}
  \mathbf{E}(\tilde{\mathbf{v}}_u) = \left[\left(\mathbf{e}_1^\text{f}\right)^\text{T},\left(\mathbf{e}_2^\text{f}\right)^\text{T},\cdots,\left(\mathbf{e}_L^\text{f}\right)^\text{T}\right]\in\mathbb{C}^{M\times L}.  
\end{equation}

According to (\ref{eq49}) - (\ref{eq51}), 
the optimization problem for the optimization variable $\tilde{\mathbf{v}}_u$ is 
\begin{subequations}\label{eq52}
\begin{align}
   \text{(P4)}  \quad &\underset{\tilde{\mathbf{v}}_u}{\min} \quad \frac{1}{\text{tr}\left(|\tilde{\mathbf{T}}\mathbf{E}(\tilde{\mathbf{v}}_u)|\right)} \nonumber \\
    & \text{s.t.} \quad  \text{tr}\left(|\tilde{\mathbf{T}}\mathbf{E}(\tilde{\mathbf{v}}_u)|\right) > 0.  
\end{align}
\end{subequations}
Based on the surface in Fig. \ref{fig4}, 
problem \hyperref[eq52]{(P4)} is a nonlinear, non-convex problem that can be solved by
the ABC-BFGS, and the estimated absolute velocity of the $u$-th target 
is $\tilde{\mathbf{v}}_u^* = \left[\hat{v}_u,\hat{\theta}_u\right]^\text{T}$.

\begin{figure}
    \centering   \includegraphics[width=.30\textwidth]{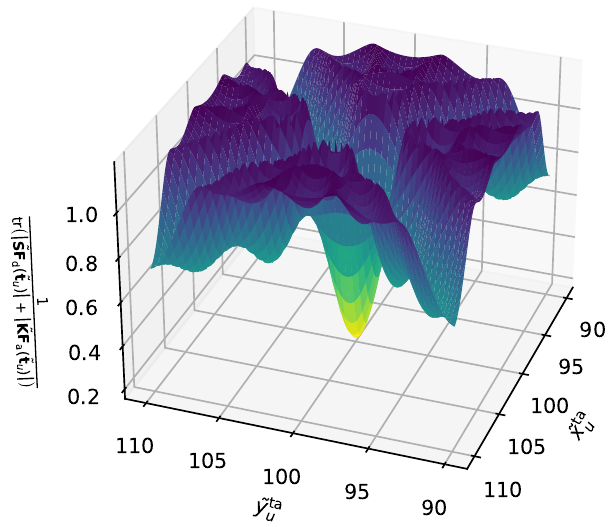}
    \caption{Surface of the objective function in problem \hyperref[eq48]{(P3)}}
    \label{fig3}
\end{figure}
\begin{figure}
    \centering   \includegraphics[width=.30\textwidth]{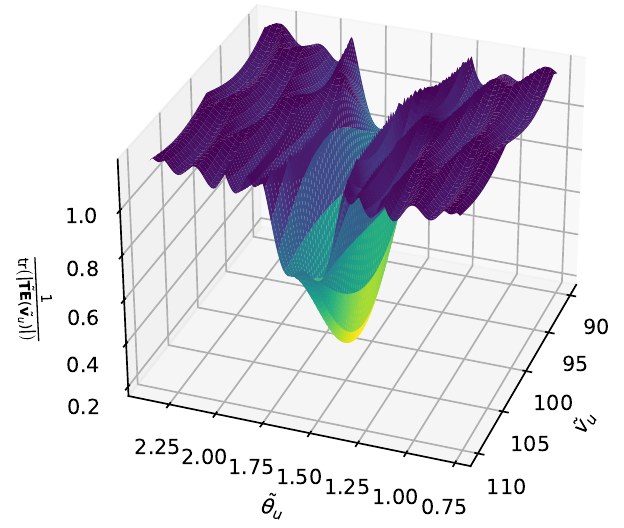}
    \caption{Surface of the objective function in problem \hyperref[eq52]{(P4)}}
    \label{fig4}
\end{figure}

\hyperref[algo3]{Algorithm 3} demonstrates the proposed SL-MDTS scheme.
\begin{table}[!ht]
\centering
\label{algo3}
\resizebox{0.9\linewidth}{!}{
\setlength{\arrayrulewidth}{1.5pt}
\begin{tabular}{rllll}
\hline
\multicolumn{5}{l}{\textbf{Algorithm 3:} Proposed SL-MDTS scheme}   \\ \hline
\multirow{-2}{*}{\textbf{Input:} }               & \multicolumn{4}{l}{\begin{tabular}[c]{@{}l@{}}Received three-order tensors $\{\mathcal{Y}\}_{l=1}^L$ in (\ref{eq31});\\ The locations of APs $\{\tilde{\mathbf{c}}_l\}_{l=1}^L$; \end{tabular}} \\
\multirow{-2}{*}{\textbf{Output:} }               & \multicolumn{4}{l}{\begin{tabular}[c]{@{}l@{}}The estimated target's locations $\{\tilde{\mathbf{t}}_u^*\}_{u=1}^U$;\\
The estimated target's absolute velocities $\{\tilde{\mathbf{v}}_u^*\}_{u=1}^U$.\end{tabular}} \\ 
\multicolumn{5}{l}{\textbf{Signal Preprocessing Stage:}} \\
1:  & \multicolumn{4}{l}{\textbf{for} $l$ in $L$ \textbf{do}}  \\
\multirow{-2}{*}{2:}       & \multicolumn{4}{l}{\begin{tabular}[c]{@{}l@{}}\hspace{0.8em} Obtain  matrices $\hat{\mathbf{U}}_l$, $\hat{\mathbf{V}}_l$, and $\hat{\mathbf{W}}_l$ in (\ref{eq36}) by \\ \hspace{0.8em} CP decomposition, and the method proposed in \cite{liu2024multipath};\end{tabular}} \\
\multirow{-2}{*}{3:}        & \multicolumn{4}{l}{\begin{tabular}[c]{@{}l@{}}\hspace{0.8em} Obtain $\tilde{\mathbf{U}}_l$, $\tilde{\mathbf{V}}_l$, and $\tilde{\mathbf{W}}_l$ in (\ref{eq40}) by data\\ \hspace{0.8em} association operation; \end{tabular}}  \\
4:  & \multicolumn{4}{l}{\textbf{end for}}  \\
\multicolumn{5}{l}{\textbf{Symbol-Level Fusion-based Sensing Stage:}} \\
5:  & \multicolumn{4}{l}{\textbf{for} $u$ in $U$ \textbf{do}}  \\
6:     & \multicolumn{4}{l}{\hspace{0.8em} Obtain the delay matrix $\mathbf{S}$ in (\ref{eq41});}  \\
7:     & \multicolumn{4}{l}{\hspace{0.8em} Obtain weighted delay matrix $\tilde{\mathbf{S}}$ by (\ref{eq41}) and (\ref{eq42});}  \\
8:     & \multicolumn{4}{l}{\hspace{0.8em} Obtain the angle matrix $\mathbf{K}$ in (\ref{eq43});}  \\
9:     & \multicolumn{4}{l}{\hspace{0.8em} Obtain weighted angle matrix $\tilde{\mathbf{K}}$ by (\ref{eq43}) and $\mathbf{w}_\text{f}$;}  \\
10:  & \multicolumn{4}{l}{\hspace{0.8em} Obtain fusing delay matrix $\mathbf{F}_\text{d}\left(\tilde{\mathbf{t}}_u\right)$ in (\ref{eq45});}  \\
11:  & \multicolumn{4}{l}{\hspace{0.8em} Formulate an optimization problem \hyperref[eq48]{(P3)};}  \\
12:  & \multicolumn{4}{l}{\hspace{0.8em} Solve \hyperref[eq48]{(P3)} by ABC-BFGS algorithm to obtain $\tilde{\mathbf{t}}_u^*$;}   \\
13:     & \multicolumn{4}{l}{\hspace{0.8em} Obtain the Doppler feature matrix $\mathbf{T}$ in (\ref{eq49});}  \\
14:     & \multicolumn{4}{l}{\hspace{0.8em} Obtain weighted Doppler matrix $\tilde{\mathbf{T}}$ by (\ref{eq42}) and (\ref{eq49});}  \\
15:  & \multicolumn{4}{l}{\hspace{0.8em} Obtain fusing Doppler matrix $\mathbf{E}\left(\tilde{\mathbf{v}}_u\right)$ in (\ref{eq51});}  \\
16:  & \multicolumn{4}{l}{\hspace{0.8em} Formulate an optimization problem \hyperref[eq52]{(P4)};}  \\
17:  & \multicolumn{4}{l}{\hspace{0.8em} Solve \hyperref[eq52]{(P4)} by ABC-BFGS algorithm to obtain $\tilde{\mathbf{v}}_u^*$;}   \\
18:  & \multicolumn{4}{l}{\textbf{end} \textbf{for}}
\\
\hline
\end{tabular}}
\end{table}

\section{Simulation}\label{se5}
This section presents simulation results to evaluate the performance of the proposed joint optimization scheme and validate the feasibility and advantages of the SL-MDTS scheme. 
In addition, the effectiveness of the two-phase cooperative sensing framework is confirmed through numerical simulations, and the interplay between the two phases is further analyzed.

Unless otherwise specified, the global simulation parameters are listed as follows. 
The carrier frequency is set to $f_\text{c}=3.5$ GHz and the sub-carrier spacing is $\Delta f = 30$ kHz. 
The number of subcarriers and OFDM symbols is $N_\text{c} = 128$ and $M=128$, respectively~\cite{liu2024carrier,liu2024target,3gpp2018nr}.

\subsection{Joint Optimization Scheme}
Without loss of generality, we set $L = 4$, with a total antenna resource of
$N_\text{A}^\text{max} = 32$, an AP feasible radius of $\varepsilon = 10$ m, 
and initial AP positions at (0, 0) m, (350, 0) m, (0, 350) m, and (350, 350) m, respectively. 
Within the target area, spanning from (125, 125) to (225, 225) m, 
the target points are randomly distributed, with each target’s absolute velocity 
randomly assigned from (90, $0^\circ$) to (120, $\pi$).
For the proposed joint optimization scheme in \hyperref[alg2]{Algorithm 2}, $\{\rho_1^0, \rho_2^0, \rho_3^0\}$ are set to $\{10^{-2}, 10^{-2}, 6\times 10^{1}\}$, $\{\hat{\varepsilon}_\zeta,\hat{\varepsilon}_\phi,\hat{\varepsilon}_\upsilon\}$ are set to $\{10^{-8},10^{-8},10^{-8}\}$, $\{\boldsymbol{\lambda}_l^k, \chi_l^k\}_{l=1}^L, \{ \gamma_u^k\}_{u=1}^U$ are random with seed being 2024. 

For Fig.~\ref{fig5}, ``Initial'' denotes
the results with the initial AP positions and antenna equipartition, 
``Random'' represents the results obtained by randomly assigning the AP positions and antenna distribution under constraints, 
and ``Proposed'' reflects the results after optimization. 
The results demonstrate that the proposed method reduces both individual CRB metrics and the overall objective function for the target area, 
validating its effectiveness. 
The results of the ``Random'' show values both below and above those of the ``Initial'', 
suggesting the presence of optimization potential in the system and indirectly reflecting the authenticity of the optimization process.
\begin{figure}
	\centering
	\subfigure[CRB of location estimation] {\label{fig5.a}\includegraphics[width=0.24\textwidth]{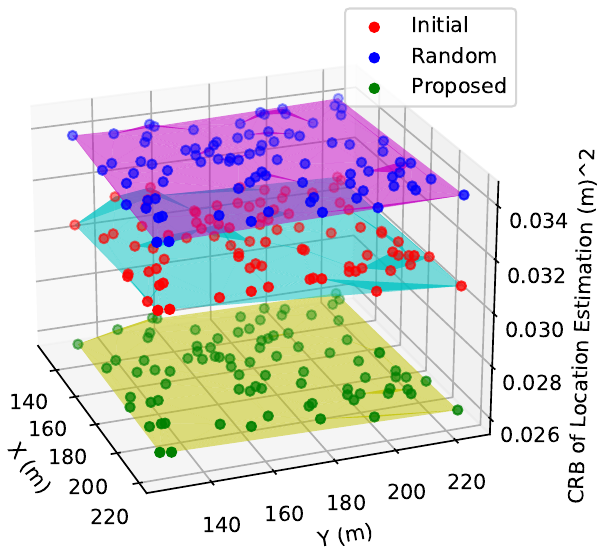}}
    \subfigure[CRB of absolute velocity estimation] {\label{fig5.b}\includegraphics[width=0.24\textwidth]{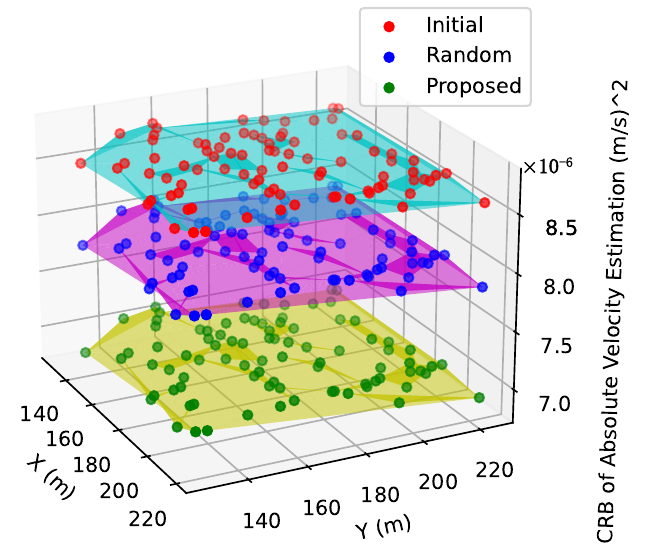}}
    \subfigure[objective function] {\label{fig5.c}\includegraphics[width=0.30\textwidth]{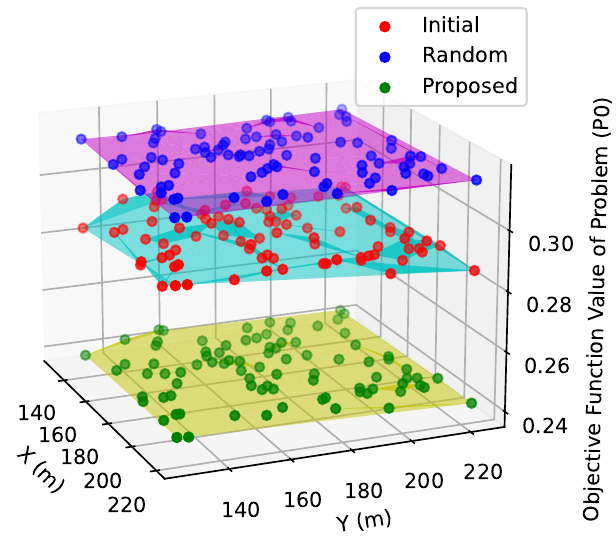}}
	\caption{Initial \textit{v.s.} Random \textit{v.s.} Proposed optimization scheme}
	\label{fig5}
\end{figure}

\subsection{Proposed SL-MDTS scheme}
The feasibility and superiority of the proposed SL-MDTS scheme are validated through comprehensive simulations. 
First, different global optimization algorithms are compared to demonstrate the superiority of the ABC algorithm adopted in this paper. 
The parameter configurations for applying the ABC algorithm are then analyzed through simulations to ensure the efficiency of the scheme. 
Finally, the feasibility and superiority of the proposed SL-MDTS scheme are validated by estimating the location and absolute velocity profiles and benchmarking against state-of-the-art methods.

\subsubsection{Comparison of global optimization algorithms}
For the problems \hyperref[eq48]{(P3)} and \hyperref[eq52]{(P4)}, swarm intelligence algorithms, including DE, SA, GA, PSO, and ABC, are commonly used~\cite{wang2018particle,karaboga2008performance}. 
Figs.~\ref{fig6}(a) and \ref{fig6}(b) demonstrate the root mean square errors (RMSEs) of 
the problem's solutions within various algorithms. 
The setup of the control parameters of DE, SA, GA, PSO and ABC is as follows~\cite{wang2018particle,karaboga2008performance,krink2004noisy}.
\textit{\textbf{DE}}: popsize =50, crossover probability = 0.5, and scaling factor = 0.8; \textit{\textbf{SA}}: initial\_temp = 5230, visit = 2.62, accept = -5, and maxiter = 100; \textit{\textbf{GA}}: popsize = 20, crossover probability = 0.9, mutation probability = 0.3, and maxiter = 50; \textit{\textbf{PSO}}: popsize = 80, inertia weight = 0.5, cognitive component = 2, social component = 2, and maxiter = 200; \textbf{\textit{ABC}}: popsize = 20, epoch = 10, and maxiter = 50.

As shown in Figs.~\ref{fig6}(a) and \ref{fig6}(b), 
ABC and GA exhibit superior performance, followed by PSO, 
while SA and DE demonstrate the poorest results. 
This is attributed to the enhanced robustness of the diversity and search mechanisms 
in ABC and GA against noise, 
whereas the acceptance criteria and mutation operations in SA and DE 
are vulnerable to noise perturbations.

\begin{figure}
	\centering
	\subfigure[RMSE for solving problem (P3)] {\label{fig6.a}\includegraphics[width=0.35\textwidth]{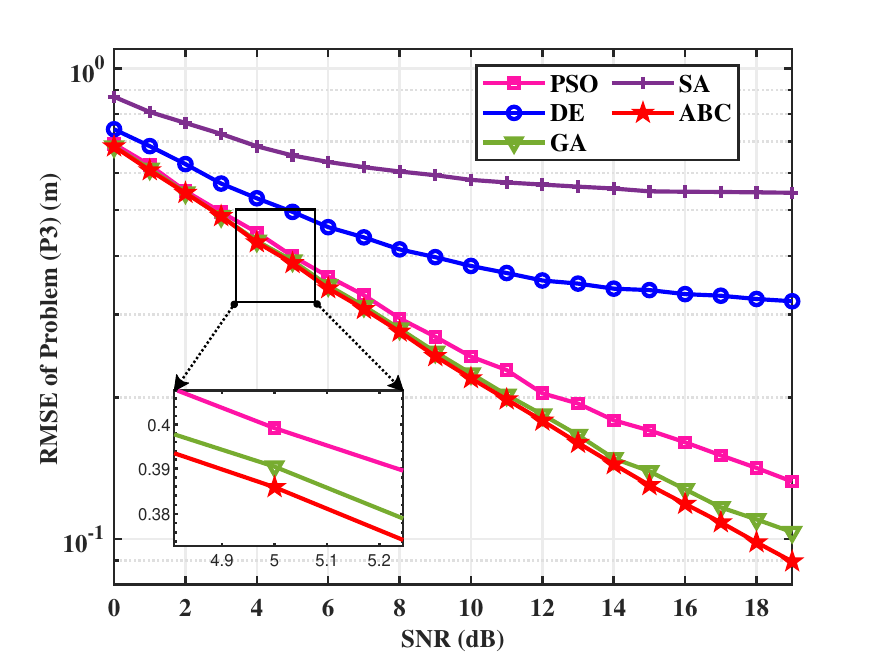}}
	\subfigure[RMSE for solving problem (P4)] {\label{fig6.b}\includegraphics[width=0.35\textwidth]{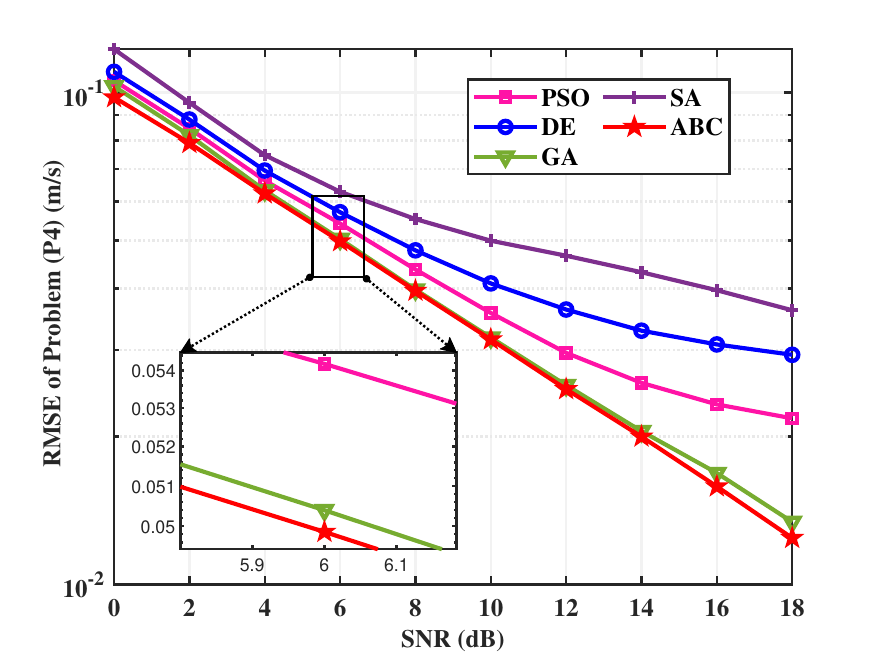}}
	\caption{Comparison of different algorithms for solving the multi-peak continuous optimization problems}
	\label{fig6}
\end{figure}

\subsubsection{Parameter selection of ABC algorithm}
Figs.~\ref{fig7}(a) and \ref{fig7}(b) analyze the relationship between epoch and RMSE for different combinations of $P$ (popsize) and SNR (in dB).

The simulation results show that increasing the popsize (P) under the same SNR (e.g., in cases 1, 2, and 3) accelerates convergence and reduces the number of epochs. 
Conversely, for the same P (e.g., in cases 3, 6, and 9), increasing the SNR reduces the error and accelerates convergence for a given error level. 
Therefore, we can select appropriate epochs and $P$ for different SNRs to ensure both fast convergence and high-precision solutions. 
This insight also motivates the design of adaptive parameters to improve algorithm performance, which will be our future work.
\begin{figure}
	\centering
	\subfigure[RMSE for solving problem (P3)] {\label{fig7.a}\includegraphics[width=0.35\textwidth]{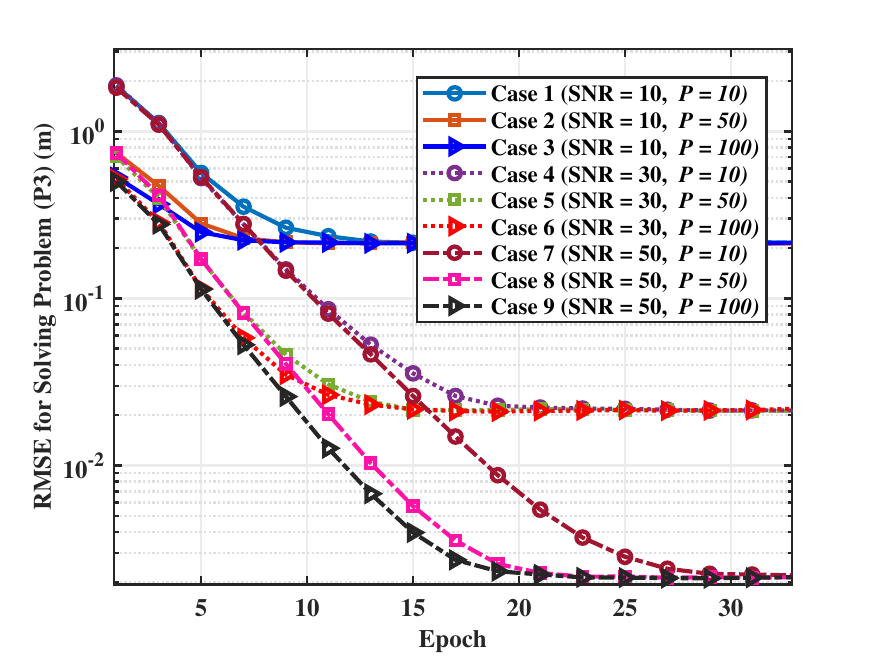}}
	\subfigure[RMSE for solving problem (P4)] {\label{fig7.b}\includegraphics[width=0.35\textwidth]{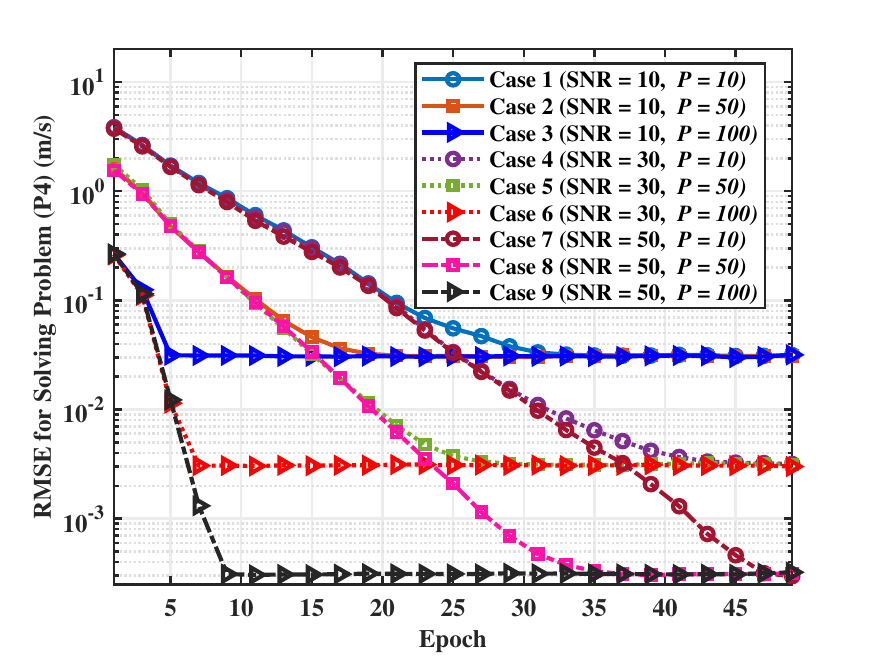}}
	\caption{RMSE \textit{v.s.} epochs for SNR of 10, 30, 50 dB and popsize of 10, 50, 100}
	\label{fig7}
\end{figure}

\subsubsection{SL-MDTS scheme}
We first investigate the feasibility of the proposed SL-MDTS scheme by simulating the estimation profiles of location and absolute velocity, respectively.
The results are shown in Fig.~\ref{fig8}, where the target detection threshold is set to 0.7 and the SNR of the simulation environment is 0 dB.
Observing Figs.~\ref{fig8}(a) and \ref{fig8}(b), the locations and absolute velocities of the three potential targets are estimated with high precision and no ghost targets are present, which demonstrates the feasibility of the proposed SL-MDTS scheme.

\begin{figure}
	\centering
	\subfigure[Location estimation] {\label{fig8.a}\includegraphics[width=0.35\textwidth]{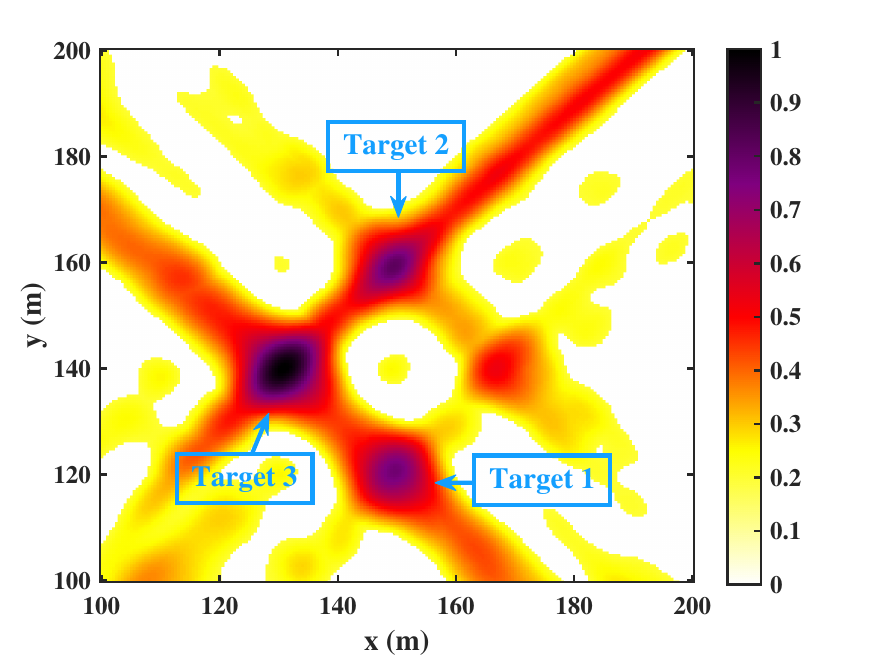}}
	\subfigure[Absolute velocity estimation] {\label{fig8.b}\includegraphics[width=0.35\textwidth]{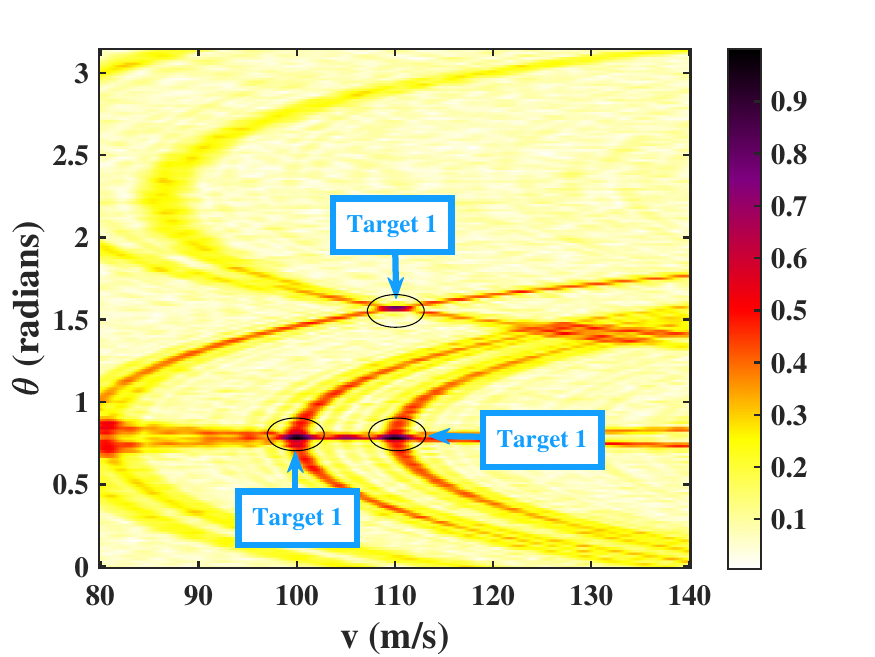}}
	\caption{Estimation profiles of location and absolute velocity with SNR = 0 dB}
	\label{fig8}
\end{figure}

Since absolute velocity estimation in CF-mMIMO ISAC cooperative sensing has not yet been explored in existing work, 
and most benchmark methods focus on data fusion, 
the evaluation emphasizes the proposed SFO method in the second stage, 
while using a uniform first stage process for all benchmarks to ensure a fair comparison.

Performance comparison for multi-dynamic target estimations using the proposed SFO method is demonstrated in Fig.~\ref{fig9}, evaluated in terms of the average RMSE (ARMSE)~\cite{liu2024carrier}. 
The theoretical performance of location and absolute velocity estimations is defined as $\sqrt{\text{CRB}_\text{p}}$ and $\sqrt{\text{CRB}_\text{a}}$.
Our benchmark methods include the MLE method~\cite{Shi_sensing} and the lattice method~\cite{wei_symbol}, where the MLE method jointly processes the estimation results from multiple APs to perform MLE, while the lattice method fuses the sensing information from multiple APs at the symbol level and employs a grid structure for estimation. 
As illustrated in Figs.~\ref{fig9}(a) and \ref{fig9}(b), compared with the two benchmark methods, the proposed SPO method improves the location estimation performance by 50.1\% and 44\%, and the absolute velocity estimation performance by 72.9\% and 41.4\%, respectively. 
This is due to the fact that SPO outperforms MLE by achieving symbol-level SNR gains and avoids the accuracy limitations of lattice methods through continuous spatial search. 
Meanwhile, the simulation results verify the superiority of the applied ABC-BFGS algorithm over ABC.

\begin{figure}
	\centering
	\subfigure[Location estimation] {\label{fig9.a}\includegraphics[width=0.33\textwidth]{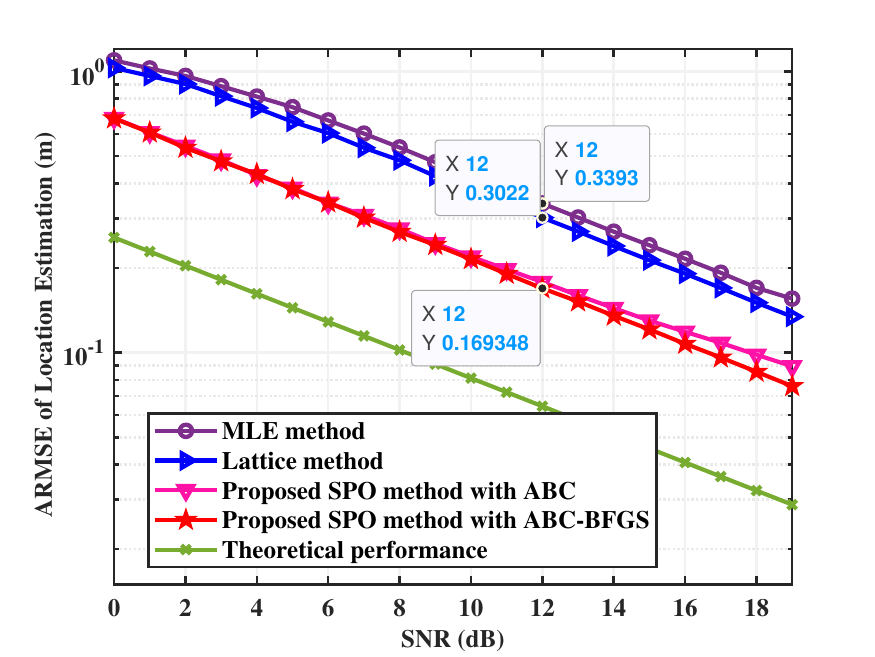}}
	\subfigure[Absolute velocity estimation] {\label{fig9.b}\includegraphics[width=0.33\textwidth]{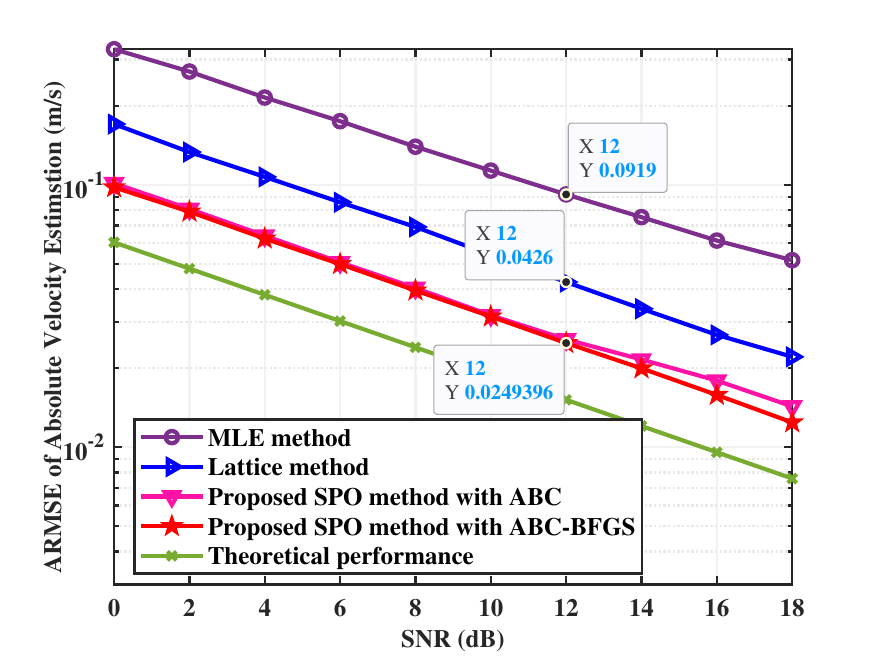}}
	\caption{ARMSEs of location and absolute velocity estimation, with the MLE~\cite{Shi_sensing} and lattice schemes~\cite{wei_symbol} as benchmarks}
	\label{fig9}
\end{figure}

\subsection{Two-phase cooperative sensing framework}
We simulate the ARMSE of multi-dynamic target sensing to explore the contribution of collaboration between the two phases to target sensing, i.e., 
the improvement of theoretical sensing performance in the first stage yields a sensing gain in the second stage.
Since minor improvements in theoretical performance under low SNR conditions are easily overwhelmed by noise-induced disturbances on the ARMSE, 
we conducted simulations under higher SNR conditions to present a more accurate ARMSE, providing a more intuitive view to reveal the contribution of 
collaboration between the two phases.

As shown in Fig.~\ref{fig10}, \textit{Initial} and \textit{Optimal} refer to the sensing ARMSEs based on the initial and optimized system parameters, respectively. 
Compared to \textit{Initial}, the ARMSEs of \textit{Optimal} are lower, demonstrating the contribution of collaboration between the two phases. 
This contribution results from the optimization of the CRB, which further reduces the achievable lower bound of the ARMSE.

\begin{figure}
	\centering
	\subfigure[Location estimation] {\label{fig10.a}\includegraphics[width=0.33\textwidth]{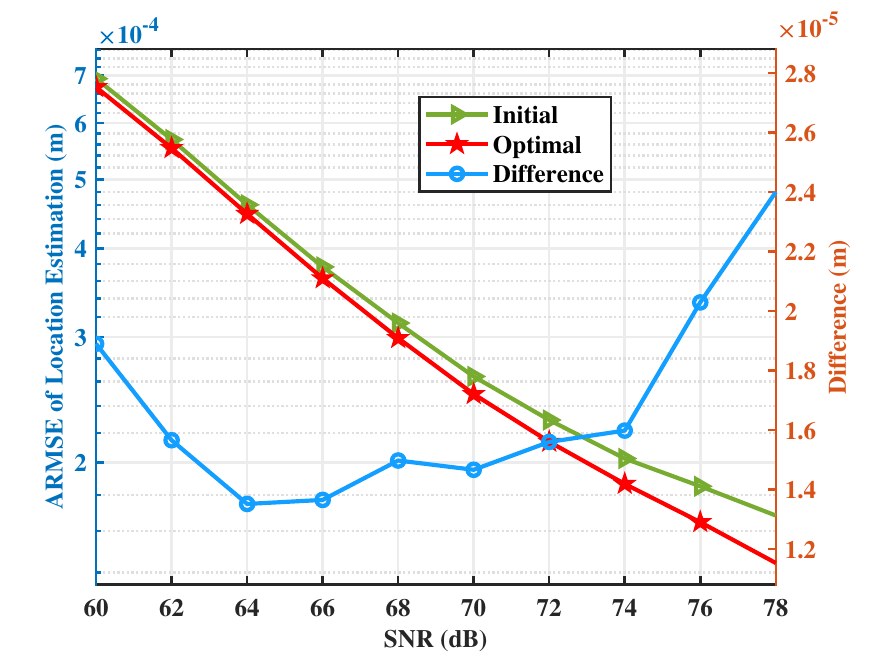}}
	\subfigure[Absolute velocity estimation] {\label{fig10.b}\includegraphics[width=0.33\textwidth]{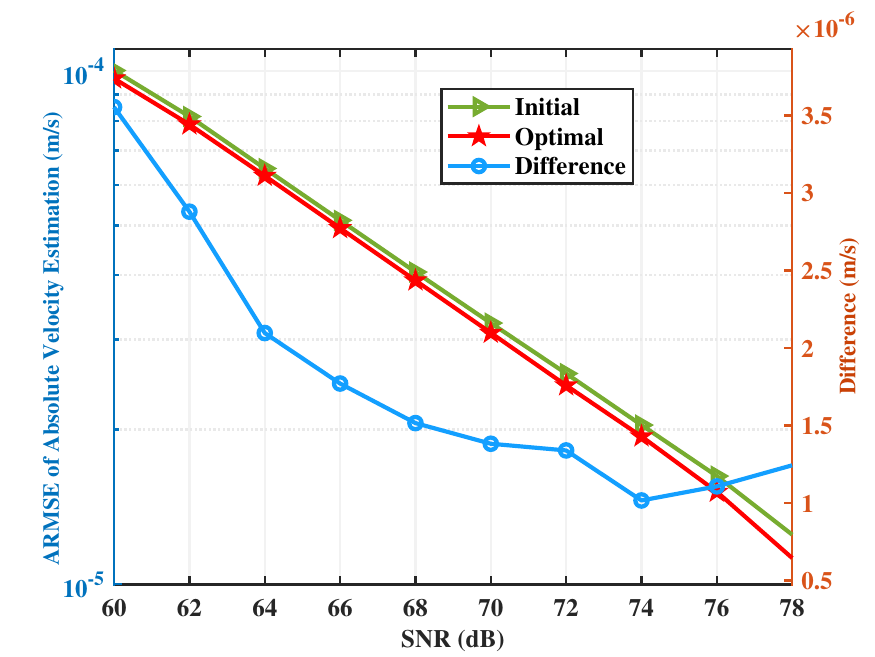}}
	\caption{ARMSEs under initial and optimized parameters, with \textit{Difference} denotes the value obtained by subtracting the ARMSE of \textit{Optimal} from the ARMSE of \textit{Initial}}
	\label{fig10}
\end{figure}

\section{Conclusion}\label{se6}
In this paper, we investigated the cooperative sensing of the CF-mMIMO ISAC system in 
multi-target scenarios and proposed a two-phase cooperative sensing framework to 
obtain the high-accuracy sensing performance. 
In the first phase, to address the challenge of joint optimization, 
the placement and antenna allocation of distributed APs were jointly optimized based on methods such as ADMM to lower the sensing CRBs. 
Based on the optimized parameters, 
an SL-MDTS scheme was proposed in the second phase to address the challenge of sensing information fusion, 
which combines symbol-level fusion with the optimization algorithm to 
obtain high-accuracy sensing performance. 
The simulation results validate that, compared to the state-of-the-art 
grid-based symbol-level sensing information fusion schemes, 
the proposed SL-MDTS scheme improves the accuracy of the location and velocity estimation
by 44\% and 41.4\%, respectively.
This work provides a reference for cooperative sensing in CF-mMIMO ISAC system.

\begin{appendices}
\section{Proof of Theorem 1} \label{apA}
In the CF-mMIMO ISAC system, we assume that 
the channel between each AP and the $u$-th target is independent and identically distributed (i.i.d.). 
According to (\ref{eq1}), the ISAC echo signal of the $l$-th AP canceling pilot data can be expressed as
\begin{equation}\label{eqap1}
    y_l^u = \sum_{p = 1}^{N_\text{A}^l}\sum_{n=1}^{N_\text{c}}\sum_{m=1}^{M}\left[\begin{array}{l}
      \alpha_l^u e^{j2\pi f_{\text{D},l}^u m T} e ^{-j2\pi n \Delta f \tau_l^u} \\ \times e ^{j2\pi p (\frac{d_\text{r}}{\lambda})\sin(\theta_l^u)} 
    \end{array}\right] + z_l^u,
\end{equation}
where $z_l^u \sim\mathcal{CN}\left(0,\sigma_\text{z}^2\right)$ is defined without loss of generality. 
Based on (\ref{eqap1}), the probability density function (PDF) of $\mathbf{t}_u^\text{N}$ is 
\begin{equation}\label{eqap2}
\begin{aligned}
&f\left(\mathbf{y};\mathbf{t}_u^\text{N}\right)=\frac{1}{\left(2\pi \sigma_\text{z}^2\right)^{L/2}} \\
& \quad \times \exp\left(\frac{-1}{2\sigma_\text{z}^2}\sum_{l=1}^L\left|y_l^u-\sum_p\sum_n\sum_m s_{l,p,n,m}^u\right|^2\right),
\end{aligned}  
\end{equation}
where $\mathbf{y} = \left[y_1^u,y_2^u,\cdots,y_L^u\right]^\text{T}$ denotes a set of echo signals received by $L$ APs, 
and 
\begin{equation}\label{eqap3}
 s_{l,p,n,m}^u= \alpha_l^u e^{j2\pi f_{\text{D},l}^u m T} e ^{-j2\pi n \Delta f \tau_l^u} e ^{j2\pi p (\frac{d_\text{r}}{\lambda})\sin(\theta_l^u)}.   
\end{equation}

The log-likelihood function of $\mathbf{t}_u^\text{N}$ is 
\begin{equation}\label{eqap4}
\begin{aligned}
\ln f\left(\mathbf{y};\mathbf{t}_u^\text{N}\right)=&\frac{-L}{2}\ln \left(2\pi \sigma_\text{z}^2\right)\\ & - \frac{1}{2\sigma_\text{z}^2} \sum_{l=1}^L\left|y_l^u-\sum_p\sum_n\sum_m s_{l,p,n,m}^u\right|^2 .   
\end{aligned}
\end{equation}
Based on (\ref{eqap4}) and the definition of FIM, 
the FIM of $\mathbf{t}_u^\text{N}$ is
\begin{equation}\label{eqap5}
 \mathbf{F}_\text{t} = \left[\begin{array}{ccc}
\mathbf{A}  & \mathbf{B} & \mathbf{C} \\
   \mathbf{B}  & \mathbf{D} & \mathbf{E} \\
   \mathbf{C}  & \mathbf{E} & \mathbf{F}
 \end{array}\right]\in \mathbb{R}^{3L\times 3L},
\end{equation}
where $\mathbf{A}$, $\mathbf{B}$, $\mathbf{C}$, $\mathbf{D}$, $\mathbf{E}$, 
and $\mathbf{F}$ are all the $L \times L$ diagonal matrices whose diagonal elements are expressed in \eqref{eq7} - \eqref{eq12}, respectively.

For the derivation of the Jacobian matrix, 
the unknown estimate vector $\mathbf{t}_u$ and transform vector $\mathbf{t}_u^\text{N}$ 
have the following geometric relationship
\begin{equation}\label{eqap6}
    \begin{cases}
        \tau_l^u = \frac{2\|\mathbf{c}_l-\mathbf{t}_u\|_2}{c} \\
        y_u^\text{tr} - y_l = \tan\left(\theta_l^u\right)\left(x_u^\text{tr} - x_l\right) \\
        f_{\text{D},l}^u = \frac{-2f_\text{c}v_u\cos\left(\theta_l^u-\theta_u\right)}{c}
    \end{cases}.
\end{equation}
Therefore, combining (\ref{eq5}) and (\ref{eqap6}), 
we can obtain the expression of $\mathbf{P}$ as shown in (\ref{eq13}).

\section{Solution of Problem \hyperref[eq23]{(P2.1)}}\label{apex2}
Considering the associated term $l$ of the unconstrained convex optimization problem \hyperref[eq23]{(P2.1)}, 
this is a minimization problem with a linear term and a quadratic penalty term. 
According to the first-order definition of convex optimization~\cite{nocedal1999numerical}, 
the gradient at the global optimal value is zero.
Therefore, we can calculate the gradient of the objective function with respect to $\mathbf{a}_l^k$, i.e., 
\begin{equation}\label{eqap7}
    \nabla_{\mathbf{a}_l^k} = \boldsymbol{\lambda}_l^k + \rho_1\left(\mathbf{a}_l^k - \mathbf{C}_l^\text{t}\mathbf{z}^k +  \hat{\mathbf{c}}_l \right),
\end{equation}
If $\nabla_{\mathbf{a}_l^k} = 0$, 
the global optimal value is $\mathbf{a}_l^* = \mathbf{C}_l^\text{t}\mathbf{z}^k -  \hat{\mathbf{c}}_l - \frac{\boldsymbol{\lambda}_l^k}{\rho_1}$. 
To satisfy the constraint (\ref{eq23a}), 
the projection operation need to performed to ensure that the optimal value falls within the feasible domain. 
The final updated value is 
\begin{equation}
    \mathbf{a}_l^{k+1} = \left\{\begin{matrix}
 \mathbf{C}_l^\text{t}\mathbf{z}^k -  \hat{\mathbf{c}}_l - \frac{\boldsymbol{\lambda}_l^k}{\rho_1}, \quad \|\mathbf{C}_l^\text{t}\mathbf{z}^k -  \hat{\mathbf{c}}_l - \frac{\boldsymbol{\lambda}_l^k}{\rho_1}\|_2 \le \varepsilon  \\
 \frac{\mathbf{C}_l^\text{t}\mathbf{z}^k -  \hat{\mathbf{c}}_l - \frac{\boldsymbol{\lambda}_l^k}{\rho_1}}{\|\mathbf{C}_l^\text{t}\mathbf{z}^k -  \hat{\mathbf{c}}_l - \frac{\boldsymbol{\lambda}_l^k}{\rho_1}\|_2} \varepsilon, \quad \|\mathbf{C}_l^\text{t}\mathbf{z}^k -  \hat{\mathbf{c}}_l - \frac{\boldsymbol{\lambda}_l^k}{\rho_1}\|_2 \ge \varepsilon
\end{matrix}\right.,
\end{equation}
which can be simplified as (\ref{eq24}).

\end{appendices}

\bibliographystyle{IEEEtran}
\bibliography{reference}

\end{document}